\theoremstyle{thmstyleone}%
\newtheorem{theorem}{Theorem}
\newtheorem{lemma}{Lemma}
\newtheorem{corollary}{Corollary}
\theoremstyle{thmstyletwo}%
\newtheorem{remark}{Remark}%
\theoremstyle{thmstylethree}%
\newtheorem{definition}{Definition}%
\begin{document}

\title[Fundamental Portfolio Outperforms the Market Portfolio]{Fundamental Portfolio Outperforms the Market Portfolio}


\author[1]{\fnm{Hayden} \sur{Brown} ORCID: 0000-0002-2975-2711}\email{haydenb@nevada.unr.edu}

\affil[1]{\orgdiv{Department of Mathematics and Statistics}, \orgname{University of Nevada, Reno}, \orgaddress{1664 \street{N. Virginia Street}, \city{Reno}, \postcode{89557}, \state{Nevada}, \country{USA}}}


\abstract{There is substantial empirical evidence showing the fundamental portfolio outperforming the market portfolio. Here a theoretical foundation is laid that supports this empirical research. Assuming stock prices revert around fundamental prices with sufficient strength and symmetry, the fundamental portfolio outperforms the market portfolio in expectation. If reversion toward the fundamental price is not sufficiently strong, then the fundamental portfolio underperforms the market portfolio in expectation.}

\keywords{Fundamental portfolio, Fundamental index, Market portfolio, Value stocks}



\maketitle

\section{Introduction}
Value and growth stocks are typically differentiated based on the ratios of book-to-market equity (B/M), earnings to price (E/P), cash flow to price (C/P) and dividend to price (D/P). Investment managers associate higher ratios with value stocks and lower ratios with growth stocks. Historically, value stocks have outperformed growth stocks. Since large market portfolios like the S\&P 500 are a blend of value and growth stocks, this is encouraging evidence that portfolios which overweight value stocks can beat the market portfolio in the long run. 

One simplistic explanation for this outperformance by value stocks uses E/P. The idea is that E/P has a constant trend in the market. Stock E/Ps fluctuate above and below this trend, with a stronger tendency to revert toward the trend than away. Value stocks lie above the trend, and growth stocks lie below. Assuming constant earnings, this reversion will eventually make the price of value stocks go up and the price of growth stocks go down. Of course, earnings are not constant, and the effect of the reversion on price is more complicated. 

Here, a \textit{fundamental portfolio} is constructed that overweights value stocks and underweights growth stocks. Assuming that stock prices revert around a fundamental price, conditions are given such that the fundamental portfolio outperforms the market portfolio in expectation, not counting dividends. Surprisingly, outperformance in expectation is not guaranteed, and the reversion strength must be sufficiently strong. 

\subsection{Literature Review}
Value stocks have outperformed growth stocks on an international level. From 1974 to 1994, value stocks provided substantially higher returns than growth stocks in the United States and twelve major EAFE (Europe, Australia, and the Far East) countries \cite{fama1998value}. In the US, this outperformance is evidenced beyond just these 20 years. Value stocks provided higher returns in the US from 1963 to 1990 \cite{fama1992cross}, 1986 to 2002 \cite{chan2004value}, and 1999 to 2014 \cite{an2017value}.

Outperformance of the market portfolio is a major topic because of its implication on the existence of arbitrage. If a portfolio outperforms the market portfolio with probability 1, then arbitrage is achieved by shorting the market portfolio and using the short position to finance a long position in the other portfolio. The Capital Asset Pricing Model (CAPM) does not support outperformance with probability 1, but it does support outperformance in expectation \cite{sharpe1964capital}. Using continuous rebalancing and stock price processes which are adapted to Brownian motion, outperformance with probability 1 is possible, provided a sufficient amount of time has passed \cite{fernholz2002stochastic}. However, this outperformance does not factor in the discrete-in-time distribution of dividends. Using a similar framework as \cite{fernholz2002stochastic}, conditions are given that lead to outperformance of the market portfolio with a given probability \cite{bayraktar2012outperforming}. In reality, rebalancing is discrete in time, so it is worth investigating outperformance in this setting.

The concept of a fundamental portfolio is detailed in \cite{arnott2005fundamental}. Instead of weighting companies based on capitalization, like in the S\&P 500, companies are weighted based on some fundamental metric. Examples of fundamental metrics include gross revenue, equity book value, gross sales, gross dividends, cash flow, and total employment. In the period 1962 to 2004, such fundamentally weighted portfolios outperformed the S\&P 500. After applying a bootstrap procedure to the period 1982 to 2008, there is evidence of outperformance by fundamental portfolios on a global level, but not on a country-specific level \cite{walkshausl2010fundamental}. Some fundamental index funds that have been launched on the US market are listed in \cite{amenc2008comparison}, along with an analysis of their performance. Here, it is assumed that a given stock price reverts around a fundamental price. This fundamental price can be constructed using the abovementioned fundamental metrics, or any other relevant metrics. What matters is that the necessary reversion conditions are met. 

The fundamental portfolio has been investigated thoroughly from an emperical standpoint; a summary can be found in \cite{hsu2011fundamental}. Mean reversion of various stock-related metrics has also received significant attention. There is evidence of mean reversion in E/P for the S\&P 500 \cite{becker2012empirical}. Furthermore, individual company E/Ps tend to revert toward the industry norm over time \cite{bajaj2005mean}. In general, there is evidence supporting mean reversion in stock prices for emerging markets \cite{chaudhuri2003mean} and developed markets \cite{balvers2000mean}. Theoretical results concerning the fundamental portfolio have not been pursued with the same vigor. The connection between this mean reversion and the fundamental portfolio's performance versus the market portfolio remains unexplored. 

\begin{figure}[h] 
\begin{center}
  \includegraphics{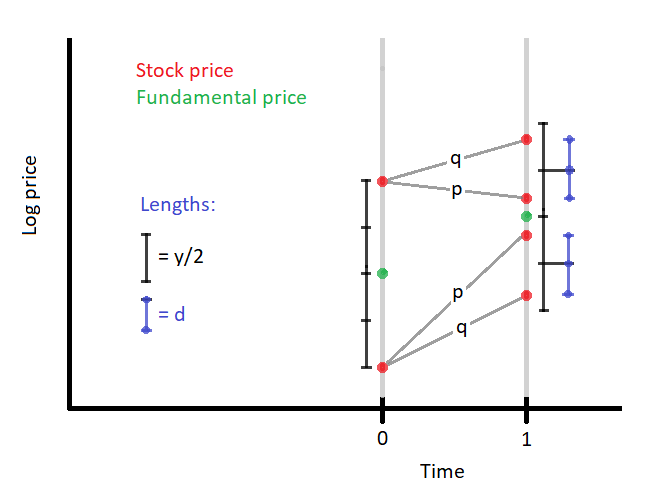}
\end{center}
  \caption{It is assumed that fundamental prices are deterministic and $y,d\geq0$. The conditions of Theorem \ref{T1} involve only the random, reverting processes added to the logarithm of the fundamental price. Each letter $p$ and $q$ has a line going through it indicating a realization of stock prices for times 0 and 1. The letters $p$ and $q$ indicate the probability of each realization. Theorem \ref{T1} requires $1\geq p\geq q\geq 0$. The symmetry shown in the probabilities is also required.
 }
  \label{fig:T1}
\end{figure}

\begin{figure}[h] 
\begin{center}
  \includegraphics{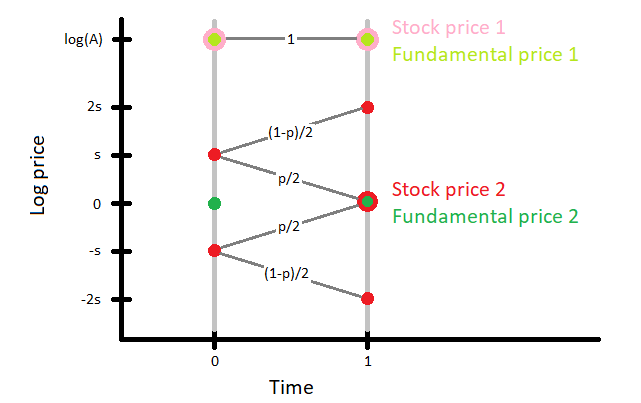}
\end{center}
  \caption{The possible realizations for the price of two stocks at times 0 and 1 are illustrated. According to Theorem \ref{Tcounter}, there exist $s>0$, $p\in(\frac{1}{2},1]$ and $A>0$ such that the fundamental portfolio is expected to underperform the market portfolio. Note that rebalancing occurs at time 0, and the underperformance in expectation occurs at time 1. Fundamental prices are assumed to be deterministic. Each non-vertical gray line connects a stock price at time 0 with a stock price at time 1, indicating a possible realization of the stock price at times 0 and 1. The expression in the middle of a particular gray line indicates the probability of that realization. 
 }
  \label{fig:Tcounter}
\end{figure}

\subsection{Results}
When stock prices revert around fundamental prices according to some general conditions on the symmetry and strength of the reversion, the fundamental portfolio outperforms the market portfolio in expectation. An upside of this result is that it holds for discrete-time rebalancing. A downside of this result is that it does not factor in dividends. More specifically, the outperformance only applies to the value of a portfolio without counting or reinvesting dividends. One other downside is that the result requires each stock price to have an independent reverting process. However, the result can apply to stock prices with dependent reverting processes, provided the linearity condition outlined at the end of Section \ref{prelim} is satisfied.

Theorem \ref{T1} provides the first set of conditions causing the fundamental portfolio to outperform the market portfolio in expectation. The conditions are illustrated in Figure \ref{fig:T1}. This result also addresses semi-fundamental portfolios, which are a hybrid of the fundamental and market portfolios. Theorem \ref{T3} provides another set of conditions leading to the same result as Theorem \ref{T1}. The advantage of Theorem \ref{T3} is that its conditions use conditional probability, which can be easier to work with in practice. Three corollaries show how Theorem \ref{T3} is applied when the reverting processes are Ornstein-Uhlenbeck processes, order 1 autoregressive processes and Markov chains. Theorem \ref{T2} relaxes the condition on reversion strength of Theorem \ref{T1}, at the cost of some additional boundary conditions on the change in stock price and fundamental price between rebalancing times.

Theorem \ref{Tcounter} shows the importance of the reversion strength condition given in Theorems \ref{T1}, \ref{T3} and \ref{T2}. Even when the reversion makes it more likely for the stock price to move toward the fundamental price than away, if this gravitation toward the fundamental price is not strong enough, the fundamental portfolio will underperform the market portfolio in expectation. The stock price processes used in Theorem \ref{Tcounter} are illustrated in Figure \ref{fig:Tcounter}.

\subsection{Organization}
Section \ref{prelim} provides basic definitions and lemmas in order to approach the fundamental versus market portfolio problem. In particular, a measure theoretic framework is constructed. Portfolios are formally defined, along with the value of an investment in a portfolio over time. Section \ref{mainresults} provides the main results. Appendix \ref{appendix} contains all proofs and some additional lemmas. Section \ref{conclusion} provides closing remarks and ideas for future research.

\section{Preliminaries}\label{prelim}
First, an $n$-dimensional stochastic process is constructed to describe the randomness in stock prices.

\begin{definition}
Take $n$ probability measure spaces $(\boldsymbol{\Omega},\mathcal{A},\mathbb{P})_i=(\Omega_i,\mathcal{A}_i,\mathbb{P}_i)$, $i=1,...,n$, and denote their product measure space in the standard sense as 
\begin{equation*}
(\boldsymbol{\Omega},\mathcal{A},\mathbb{P})=(\times_{i=1}^n\Omega_i,\otimes_{i=1}^n\mathcal{A}_i,\otimes_{i=1}^n\mathbb{P}_i).
\end{equation*}
Define the $n$-dimensional stochastic process $\mathbf{Y}:\boldsymbol\Omega\times\mathbb{R}\to\mathbb{R}$ such that $\mathbf{Y}(\boldsymbol\omega,t)=(Y_1(\omega_1,t),\ ...,\ Y_n(\omega_n,t))$, and for all $t\in\mathbb{R}$, $\mathbf{Y}(\boldsymbol\omega,t)$ is $\mathcal{A}$-measurable. In addition, require $\mathbb{E}\vert  Y_i(\boldsymbol\omega,t)\vert $ and $\mathbb{E}\exp Y_i(\boldsymbol\omega_i,t)$ to be finite for all $i\in\{1,...,n\}$ and $t\in\mathbb{R}$. 
\end{definition}

Note that $\mathbb{E}$ denotes expectation with respect to $\boldsymbol\Omega$. Furthermore, this construction of $\mathbf{Y}$ makes the $Y_i$ independent. Next, the fundamental price and stock price is defined.

\begin{definition}
For $i=1,..,n$, define the function $F_i:\mathbb{R}\to(0,\infty)$. $F_i(t)$ indicates the fundamental price of stock $i$ at time $t$. The $i$th stock price $X_i:\boldsymbol\Omega\times \mathbb{R}\to(0,\infty)$ is defined by $X_i(\boldsymbol\omega,t)=F_i(t)\exp(Y_i(\omega_i,t))$. Denote $\mathbf{F}(t)=(F_1,...,F_n)(t)$ and $\mathbf{X}(\boldsymbol\omega,t)=(X_1,...,X_n)(\boldsymbol\omega,t)$.
\end{definition}

Note that the fundamental prices $F_i(t)$ are deterministic, whereas the stock prices $X_i(\boldsymbol\omega,t)$ are random. Furthermore, $\exp :\mathbb{R}\to(0,\infty)$ is Borel measurable, so $X_i(\boldsymbol\omega,t)$ is $\mathcal{A}$-measurable for each $t\in\mathbb{R}$.

To simplify notation, functions that take inputs $(\boldsymbol\omega,t)\subset\boldsymbol\Omega\times\mathbb{R}$ or $(\omega_i,t)\subset\Omega_i\times\mathbb{R}$ may be abbreviated with the single input $(t)$. Next, the discrete rebalancing scheme is defined.

\begin{definition}
Fix $\{t_k\}_{k=0}$ as an increasing sequence in $\mathbb{R}$ with at least two elements. Investment is begun at $t_0$, and rebalancing occurs at each $t_k$. No short sales are permitted. Use $T$ to denote the set of elements in the sequence $\{t_k\}_{k=0}$. The portfolio weights at each rebalancing time are given by the portfolio function $\boldsymbol\pi:T\to(0,1)^n$, where $\boldsymbol\pi(t_k)=(\pi_i(t_k),\ ...,\ \pi_n(t_k))$, $\sum_{i=1}^n\pi_i(t_k)=1$ and $\pi_i(t_k)$ indicates the weight for stock $i$ at time $t_k$. 
\label{d:rebalance}
\end{definition}

Next, the value function, $V_{\pi}$, is defined for a particular portfolio function $\boldsymbol\pi$. It tracks the value of an investment, not counting dividends. There is no reinvestment of dividends in this model.

\begin{definition}
Use $V_\pi:\boldsymbol\Omega\times T\to[0,\infty)$ to denote the value of an investment at each rebalancing time, according to portfolio $\boldsymbol\pi$. Then
\begin{equation}
V_\pi(t_{k+1})=V_\pi(t_k)\sum_{i=1}^n\pi_i(t_k)\frac{X_i(t_{k+1})}{X_i(t_k)}.
\label{valueOriginal}
\end{equation}
Assume that $V_\pi(t_0)=1$.
\label{d:V}
\end{definition}

Lemma \ref{l:logV} describes the change in log-value between rebalancing times $t_k$ and $t_{k+1}$. Lemma \ref{l:dlogV} describes the change in log-value-difference between two portfolios and rebalancing times $t_k$ and $t_{k+1}$. 

\begin{lemma}
\eqref{valueOriginal} is written in logorithmic form, using $\Delta_k f$ as shorthand for $f(t_{k+1})-f(t_k)$.
\begin{equation}
\Delta_k\log V_\pi=\log\sum_{i=1}^n\pi_i(t_k)\exp\Delta_k\log X_i.
\label{valueLog}
\end{equation}
\label{l:logV}
\end{lemma}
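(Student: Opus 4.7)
The plan is to derive \eqref{valueLog} by direct algebraic manipulation of \eqref{valueOriginal}, since the lemma is essentially a restatement in logarithmic coordinates. First I would divide both sides of \eqref{valueOriginal} by $V_\pi(t_k)$, which is legitimate since $V_\pi$ takes values in $[0,\infty)$ and, starting from $V_\pi(t_0)=1$, each multiplicative factor $\sum_i \pi_i(t_k) X_i(t_{k+1})/X_i(t_k)$ is strictly positive (the $\pi_i$ lie in $(0,1)$ by Definition \ref{d:rebalance} and the $X_i$ map into $(0,\infty)$), so $V_\pi(t_k)>0$ pointwise and the ratio is well defined.

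Next I would take the logarithm of both sides. The left-hand side becomes $\log V_\pi(t_{k+1}) - \log V_\pi(t_k) = \Delta_k \log V_\pi$ by the shorthand introduced in the statement. On the right-hand side, I would rewrite each ratio as
\begin{equation*}
\frac{X_i(t_{k+1})}{X_i(t_k)} = \exp\bigl(\log X_i(t_{k+1}) - \log X_i(t_k)\bigr) = \exp \Delta_k \log X_i,
\end{equation*}
using positivity of $X_i$, and then substitute inside the sum before applying $\log$ to the whole convex combination. Combining gives exactly \eqref{valueLog}.

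There is no real obstacle here; the only subtlety worth flagging is the positivity check that justifies taking logarithms pointwise on $\boldsymbol\Omega$, which follows inductively from $V_\pi(t_0)=1$ together with the observations above.
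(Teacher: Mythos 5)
Your proposal is correct and is precisely the routine calculation the paper has in mind (its proof simply says the result follows easily from Definition \ref{d:V}). The positivity check you flag is a reasonable extra bit of care but does not change the approach.
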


\begin{lemma}
Given two portfolios $\boldsymbol\pi$ and $\boldsymbol\eta$, 
\begin{equation*}
\Delta_k\log\frac{V_\pi}{V_\eta}=\log\frac{\sum_{i=1}^n\pi_i(t_k)\exp\Delta_k\log X_i}{\sum_{i=1}^n\eta_i(t_k)\exp\Delta_k\log X_i}.
\end{equation*}
Moreover, for each $t_k\in T$ with $k>0$, $\log\frac{V_\pi(\boldsymbol\omega,t_k)}{V_\eta(\boldsymbol\omega,t_k)}$ is $\mathcal{A}$-measurable and integrable with respect to $(\boldsymbol\Omega,\mathcal{A},\mathbb{P})$. 
\label{l:dlogV}
\end{lemma}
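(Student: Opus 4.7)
The identity for $\Delta_k\log(V_\pi/V_\eta)$ is essentially bookkeeping. Apply Lemma \ref{l:logV} to both portfolios to get $\Delta_k\log V_\pi = \log\sum_i\pi_i(t_k)\exp\Delta_k\log X_i$ and the analogous expression for $\eta$; subtract and use $\log a-\log b=\log(a/b)$. Telescoping then gives $\log(V_\pi(t_k)/V_\eta(t_k))=\sum_{j=0}^{k-1}\Delta_j\log(V_\pi/V_\eta)$, valid since $V_\pi(t_0)=V_\eta(t_0)=1$ by Definition \ref{d:V}, and since inductively $V_\pi(t_j),V_\eta(t_j)>0$ thanks to $X_i>0$ and $\pi_i,\eta_i>0$.

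For measurability, $X_i(\boldsymbol\omega,t)$ is $\mathcal{A}$-measurable at each fixed $t$ by construction, so $\log X_i$ and $\exp\Delta_j\log X_i$ are measurable via composition with continuous functions on $(0,\infty)$ and $\mathbb{R}$ respectively. The weights $\pi_i(t_j)$ and $\eta_i(t_j)$ are deterministic constants, so the numerator and denominator inside the logarithm are $\mathcal{A}$-measurable strictly positive functions, making their ratio and its logarithm $\mathcal{A}$-measurable. A finite sum over $j=0,\ldots,k-1$ preserves measurability.

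For integrability, the key trick is to factor out the largest exponent. Let $M_j(\boldsymbol\omega):=\max_i\Delta_j\log X_i(\boldsymbol\omega)$ and observe that
\[
\sum_i\pi_i(t_j)\exp(\Delta_j\log X_i-M_j)\in\bigl[\min_i\pi_i(t_j),\ 1\bigr],
\]
since each exponential is at most $1$ by definition of $M_j$, and the sum is at least $\pi_{i^*}(t_j)\geq\min_i\pi_i(t_j)$ at the maximizing index $i^*$. Taking logarithms, and noting the analogous bound holds for $\eta$, the common term $M_j$ cancels in the difference, leaving the almost sure bound
\[
|\Delta_j\log(V_\pi/V_\eta)|\leq -\log\min_i\pi_i(t_j)-\log\min_i\eta_i(t_j),
\]
a finite deterministic constant because $\pi_i,\eta_i\in(0,1)$. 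Summing over $j=0,\ldots,k-1$ yields an almost sure bound on $|\log(V_\pi(t_k)/V_\eta(t_k))|$ by a finite constant, which gives integrability immediately.

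The main obstacle is the integrability bound. A naive route would try to control $|\log\sum_i\pi_i\exp\Delta_j\log X_i|$ directly using $\mathbb{E}|Y_i|<\infty$ and $\mathbb{E}\exp Y_i<\infty$; this is workable via $|M_j|\leq\sum_i|\Delta_j\log X_i|$, but only delivers an $L^1$ bound and is clumsier. The cleaner observation is that the dominant stochastic piece $M_j$ is common to numerator and denominator and therefore cancels in the ratio, so only the deterministic weight contributions survive. Once this cancellation is spotted, the rest is routine.
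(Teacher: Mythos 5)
Your proof is correct, and while the identity and measurability portions match the paper's argument, your integrability argument takes a genuinely different (and arguably sharper) route. The paper works term by term: it writes $\log\frac{V_\pi(t_k)}{V_\eta(t_k)}=\sum_{j=0}^{k-1}(\Delta_j\log V_\pi-\Delta_j\log V_\eta)$ and shows each $\Delta_j\log V_\pi$ is separately integrable, using the concavity bound $\vert\log\sum_ia_ix_i\vert\leq\vert\sum_ia_ix_i\vert+\vert\sum_i\log(a_ix_i)\vert$ together with the standing hypotheses $\mathbb{E}\vert Y_i(t)\vert<\infty$ and $\mathbb{E}\exp Y_i(t)<\infty$; this is essentially the ``naive route'' you describe at the end, except the paper's version is workable precisely because those moment assumptions were built into the definition of $\mathbf{Y}$. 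Your argument instead normalizes by $M_j=\max_i\Delta_j\log X_i$ and observes that this dominant random term cancels in the ratio, yielding the deterministic bound $\vert\Delta_j\log(V_\pi/V_\eta)\vert\leq-\log\min_i\pi_i(t_j)-\log\min_i\eta_i(t_j)$. This buys you something real: $\log\frac{V_\pi(t_k)}{V_\eta(t_k)}$ is shown to be \emph{bounded}, not merely integrable, and no moment assumptions on $Y_i$ are needed for this part of the lemma. The trade-off is that the paper's stronger term-by-term conclusion (integrability of each $\Delta_j\log V_\pi$ on its own) is implicitly reused later, e.g.\ when Fubini is applied in the proof of Theorem \ref{T1}, whereas your bound only controls the difference; for the lemma as stated, however, your argument is complete and self-contained.
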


Next, the fundamental and market portfolios are defined. $\boldsymbol\pi^m$ denotes the portfolio which uses the fundamental price as weights for the first $m$ stocks and the stock price as weights for the remaining stocks. So $\boldsymbol\pi^0$ is the market portfolio and $\boldsymbol\pi^n$ is the fundamental portfolio. When $0<m<n$, the portfolio $\boldsymbol\pi^m$ is a hybrid of the market and fundamental portfolios. 
\begin{definition}
Set
\begin{equation*}
\lambda_i^m(t_k)=\begin{cases}F_i(t_k)& i\leq m\\X_i(t_k)& \text{otherwise}\end{cases},\quad\pi_i^{m}(t_k)=\frac{\lambda_i^m(t_k)}{\sum_{j=1}^n\lambda_j^m(t_k)},\quad\boldsymbol\pi^m(t_k)=(\pi_1^m,...,\pi_n^m)(t_k),
\end{equation*}
where $t_k\in T$, $m=0,...,n$ and $i=1,...,n$. 
\end{definition}

Next, additional measure theory notation is introduced. $\mu_k^m$ measures the probability that $[Y_{m}(t_k),\Delta_kY_{m}]$ lies in some 2-dimensional Borel set. $u_k^m$ measures the probability that $[Y_{m}(t_k)]$ lies in some 1-dimensional Borel set. $\nu_k^m(y,\cdot)$ measures the conditional probability that $[Y_{m}(t_k),\Delta_kY_{m}]$ lies in some 2-dimensional Borel set, given $Y_{m}(t_k)=y$.
\begin{definition}
Let $\mathcal{R}$ be the set of rectangles in $\mathbb{R}^2$. $R\in\mathcal{R}$ implies $R=I_1\times I_2$, where $I_1$ and $I_2$ are intervals in $\mathbb{R}$. Let $\mathcal{B}=\sigma(\mathcal{R})$ denote the Borel sigma algebra for $\mathbb{R}^2$. Let $\mathcal{B}_1$ denote the Borel sigma algebra for $\mathbb{R}$. For each $t_{k+1}\in T$, $k\geq0$ and $m\in\{1,...,n\}$, define the functions
\begin{equation*}
\begin{split}
&\mu^{m}_k:\ \mathcal{B}\to[0,1],\quad \mu_k^{m}(B)=\mathbb{P}_{m}([Y_{m}(t_k),\Delta_kY_{m}]^{-1}(B))\\
&u^m_k:\ \mathcal{B}_1\to[0,1],\quad u_k^m(B)=\mathbb{P}_{m}([Y_{m}(t_k)]^{-1}(B))\\
&\nu_k^m:\ \mathbb{R}\times\mathcal{B}\to[0,1],\quad \mu_k^m(B)=\int_{\mathbb{R}}\nu_k^m(y,B)du_k^m(y).
\end{split}
\end{equation*}
\end{definition}

Intervals in $\mathbb{R}$ are sets of the form $(a,a)$, $[a,a]$, $(a,b]$, $[a,b)$, $(a,b)$, $[a,b]$, $(-\infty,a)$, $(-\infty,a]$, $(a,\infty)$, $[a,\infty)$, $(-\infty,\infty)$, where $a,b\in\mathbb{R}$. The next definition is a miscellaneous collection meant to simplify notation.
\begin{definition}
Let $A\subset\mathbb{R}^2$. Given a function $h:\mathbb{R}^2\to\mathbb{R}^2$, let $h(A)=\{h(x,y):\ (x,y)\in A\}$. Given $\mathcal{C}$, a collection of subsets of $\mathbb{R}^2$, let $\mathcal{C}_A=\{A\cap C:\ C\in\mathcal{C}\}$. Use the conventions $-A=\{(-x,-y):\ (x,y)\in A\}$, $A'=\{(x,-x-y):\ (x,y)\in A\}$ and $\bar{A}$ to indicate the closure of $A$. Define the sets
\begin{equation*}
\begin{split}
&R_1=\{(y,d_y)\in\mathbb{R}^2:\ y>0,\  d_y\geq -\frac{1}{2}y\}\\
&R_2=\{(y,d_y)\in\mathbb{R}^2:\ y>0,\  d_y> -\frac{1}{2}y\}.
\end{split}
\end{equation*}
\end{definition}

In this construction, the reverting processes $Y_i(t)$ are independent across $i$. Suppose the actual stock prices are given by $\mathbf{X}^*(\boldsymbol\omega,t)=\mathbf{A}\mathbf{X}(\boldsymbol\omega,t)$, where $\mathbf{A}$ is a fixed $n\times n$ real-valued, non-singular matrix. Note that $\mathbf{X}^*$ and $\mathbf{X}$ are both column vectors with $n$ elements each. Then it is possible to write $\mathbf{X}(\boldsymbol\omega,t)=\mathbf{A}^{-1}\mathbf{X}^*(\boldsymbol\omega,t)$, allowing the theory to be applied here as well. However, shorting may be required due to negative values in $\mathbf{A}^{-1}$. Furthermore, the results compare the fundamental and market portfolios with respect to $\mathbf{X}(\boldsymbol\omega,t)$. When the market portfolio with respect to $\mathbf{X}(\boldsymbol\omega,t)$ is expressed in terms of $\mathbf{X}^*(\boldsymbol\omega,t)$, the weights likely will not resemble the weights of the market portfolio with respect to $\mathbf{X}^*(\boldsymbol\omega,t)$.

\section{Main Results}\label{mainresults}
Theorem \ref{T1} provides the first set of measure theoretic conditions causing the fundamental portfolio to outperform the market portfolio in expectation. $(i)$ is the symmetry condition and $(ii)$ is the reversion strength condition. 

Theorem \ref{T3} provides a more practical set of conditions that give the same result as Theorem \ref{T1}. $(a)$ and $(b)$ are the analogues of $(i)$ and $(ii)$ from Theorem \ref{T1}, using conditional probability. Corollaries \ref{c:ou}, \ref{corar} and \ref{corMarkov} are applications of Theorem \ref{T3}.

Corollary \ref{c:ou} covers the case where the reverting processes are Ornstein-Uhlenbeck processes. Here, the fundamental portfolio outperforms the market portfolio in expectation when rebalancing times are sufficiently far apart.

Corollary \ref{corar} covers the case where the reverting processes are order 1 autoregressive processes. Corollary \ref{corMarkov} covers the case where the reverting processes are Markov chains. 

Theorem \ref{T2} relaxes $(ii)$ of Theorem \ref{T1}. The cost of this relaxed reversion strength condition is two boundary conditions. The boundary conditions control the change in fundamental price and stock price between rebalancing times.

Theorem \ref{Tcounter} is a counterexample to the idea that a fundamental portfolio always outperforms a market portfolio in expectation when there is symmetrical reversion toward at least one fundamental. The goal is to emphasize the importance of conditions in Theorems \ref{T1}, \ref{T3} and \ref{T2}. If those conditions are not satisfied, an investor following a fundamental portfolio could underperform the market portfolio in expectation.

\begin{theorem}
Fix $k\geq 0$ such that $t_{k+1}\in T$, and $m_1,m_2\in\{1,...,n\}$ such that $m_1\leq m_2$. Suppose the following conditions hold for all $m\in\{m_1,...,m_2\}$.
\begin{enumerate}[(i)]
\item $\mu^{m}_k(R)=\mu^{m}_k(-R)$ for all $R\in\mathcal{R}$.
\item $\mu^{m}_k(R)\leq \mu^{m}_k(R')$ for all $R\in\mathcal{R}_{R_2}$.
\end{enumerate}
Let $\odot\in\{>,\geq\}$. Then $\sum_{m=m_1}^{m_2}\mu^{m}_k(R_1)\odot 0$ implies
\begin{equation*}
\mathbb{E}\log\frac{V_{\pi^{m_2}}(t_{k+1})}{V_{\pi^{m_1-1}}(t_{k+1})}\odot \mathbb{E}\log\frac{V_{\pi^{m_2}}(t_{k})}{V_{\pi^{m_1-1}}(t_{k})}.
\end{equation*}
\label{T1} 
\end{theorem}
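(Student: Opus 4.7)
The plan is to telescope across $m$, invoke Fubini to isolate the $m$-th stock's step, and then prove the resulting analytic claim by symmetrizing and pairing up via an involution that makes the reversion inequality (ii) bite. First, Lemma~\ref{l:dlogV} and telescoping yield
\begin{equation*}
\log\frac{V_{\pi^{m_2}}(t_{k+1})}{V_{\pi^{m_1-1}}(t_{k+1})}-\log\frac{V_{\pi^{m_2}}(t_{k})}{V_{\pi^{m_1-1}}(t_{k})}=\sum_{m=m_1}^{m_2}\Delta_{k}\log\frac{V_{\pi^m}}{V_{\pi^{m-1}}}.
\end{equation*}
Since $\boldsymbol\pi^{m}$ and $\boldsymbol\pi^{m-1}$ differ only at coordinate $m$ (with $\lambda_m$ switching from $F_m(t_k)$ to $F_m(t_k)e^{Y_m(t_k)}$), an elementary rearrangement collapses the $m$-th summand to $g(Y_m(t_k),\Delta_k Y_m;B_m,C_m)$, where
\begin{equation*}
g(y,d;B,C):=\log\frac{(C+e^{y})(B+e^{d})}{(C+1)(B+e^{y+d})},
\end{equation*}
and $B_m,C_m>0$ are measurable functions of the coordinates $(Y_i)_{i\neq m}$ only. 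Independence of $Y_m$ from $(Y_i)_{i\neq m}$ and Fubini then reduce the theorem to the analytic claim: for every $B,C>0$, $\int g(y,d;B,C)\,d\mu_k^m(y,d)\geq 0$, with strict inequality whenever $\mu_k^m(R_1)>0$.

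Second, by condition (i), $\int g\,d\mu=\tfrac12\int G\,d\mu$ with $G(y,d):=g(y,d)+g(-y,-d)$. A direct computation yields the clean form
\begin{equation*}
G(y,d)=\log\frac{A_C(y)A_B(d)}{A_C(0)A_B(y+d)},\qquad A_B(t):=B^2+1+2B\cosh t,
\end{equation*}
and analogously $A_C$. Since $A_B$ and $A_C$ are even and strictly increasing on $[0,\infty)$, $G(y,d)=G(-y,-d)$ and $G$ vanishes on $\{y=0\}$; by symmetry of $\mu$ this collapses the integral to $\int g\,d\mu=\int_{\{y>0\}}G\,d\mu$. On $\{y>0\}$ the inequality $|d|\geq|y+d|$ is equivalent to $d\leq-y/2$, which forces $A_B(d)\geq A_B(y+d)$, and combined with $A_C(y)>A_C(0)$ this gives $G>0$ strictly on the reversion region $R_1'=\{y>0,\,d\leq-y/2\}$.

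Third, partition $\{y>0\}=R_2\sqcup R_2'\sqcup\{y>0,d=-y/2\}$. On $R_2$, where $G$ may have either sign, use the involution $\phi(y,d):=(y,-y-d)$, which swaps $R_2$ and $R_2'$. Its crucial feature is that the $\phi$-invariant sum
\begin{equation*}
H(y,d):=G(y,d)+G(y,-y-d)=2\log\frac{A_C(y)}{A_C(0)}
\end{equation*}
is strictly positive on $\{y>0\}$. Condition (ii), after change of variables (pulling back $\mu$ via $\phi$), gives $\int_{R_2'}G\,d\mu\geq\int_{R_2}(G\circ\phi)\,d\mu$, where $G\circ\phi>0$ on $R_2$ since $\phi(R_2)\subset R_1'$. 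Hence
\begin{equation*}
\int_{R_2}G\,d\mu+\int_{R_2'}G\,d\mu\geq\int_{R_2}H\,d\mu\geq 0,
\end{equation*}
and the remaining boundary integral over $\{d=-y/2,\,y>0\}$ is non-negative since $G>0$ there as well. Strictness comes from $\mu_k^m(R_1)>0$: either $\mu_k^m(R_2)>0$ (making $\int H\,d\mu>0$) or $\mu_k^m(R_1\setminus R_2)>0$ (boundary integral positive); summing over $m$ then delivers the $\odot$-dichotomy. The main obstacle is that condition (ii) is stated only on the $\pi$-system $\mathcal{R}_{R_2}$, while the key change-of-variables step requires the integral inequality against the continuous function $G\circ\phi$. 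I would handle this by approximating $G\circ\phi$ from below by simple functions $\sum c_j\mathbf{1}_{R_j}$ with $R_j\in\mathcal{R}_{R_2}$ and $c_j\geq 0$, then passing to the limit via monotone convergence; continuity of $G\circ\phi$ together with finiteness of $\mathbb{E}\exp Y_m(t)$ make the limit legitimate.
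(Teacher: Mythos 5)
Your proof is correct and follows essentially the same route as the paper's: telescoping over $m$, conditioning on the other coordinates, symmetrizing via $(y,d)\mapsto(-y,-d)$ using (i), and pairing $R_2$ with $R_2'$ via the involution $(y,d)\mapsto(y,-y-d)$ so that (ii) and the identity $G(y,d)+G(y,-y-d)=2\log\frac{A_C(y)}{A_C(0)}$ yield the result. The only difference is cosmetic: where you approximate $G\circ\phi$ from below by rectangle-based simple functions, the paper first extends the set inequalities of (i)--(ii) from $\mathcal{R}$ to the Borel sets (its Lemma on $(i)_{\mathcal{B}},(ii)_{\mathcal{B}}$) and then bootstraps through a general change-of-variables lemma.
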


\begin{theorem}
Fix $K>0$ such that $t_{K+1}\in T$, and $m_1,m_2\in\{1,...,n\}$ such that $m_1\leq m_2$. Suppose the following conditions hold for all $m\in\{m_1,...,m_2\}$ and $k\leq K$.
\begin{enumerate}[(a)]
\item $\nu^{m}_k(y,R)=\nu^{m}_k(-y,-R)$ for all $R\in\mathcal{R}$, $u_k^m$-almost everywhere.
\item $\nu^{m}_k(y,R)\leq \nu^{m}_k(y,R')$ for all $R\in\mathcal{R}_{R_2}$, $u_k^m$-almost everywhere.
\item $u_0^m(I)=u_0^m(\{-y:\ y\in I\})$ for all intervals $I\subset\mathbb{R}$.
\end{enumerate}
In addition, suppose there exists $m\in\{m_1,...,m_2\}$ such that 
\begin{equation*}
\begin{split}
&\nu_{K-1}^m(y,\{(z,-z):\ z\in\mathbb{R}\})<1\ \ u_{K-1}^m\text{-almost everywhere},\\
&\nu_K^m(y,R_1)+\mathcal{X}_{(-\infty,0]}(y)>0\ \ u_K^m\text{-almost everywhere}.
\end{split}
\end{equation*} 
Then 
\begin{equation*}
\mathbb{E}\log\frac{V_{\pi^{m_2}}(t_{K+1})}{V_{\pi^{m_1-1}}(t_{K+1})}>\mathbb{E}\log\frac{V_{\pi^{m_2}}(t_{K})}{V_{\pi^{m_1-1}}(t_{K})}.
\end{equation*}
\label{T3} 
\end{theorem}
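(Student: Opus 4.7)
The plan is to reduce Theorem \ref{T3} to a single application of Theorem \ref{T1} at time $k=K$ with strict inequality $\odot=\,>$. The task is to verify, from the conditional hypotheses (a)--(c), that $\mu_K^m$ satisfies both (i) and (ii) of Theorem \ref{T1} for each $m\in\{m_1,\ldots,m_2\}$, and that $\sum_{m=m_1}^{m_2}\mu_K^m(R_1)>0$.

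First I would propagate symmetry of the marginals $u_k^m$ inductively in $k$. The base case is exactly condition (c). For the inductive step, write $u_{k+1}^m$ as the pushforward of $\mu_k^m$ under $(y,d)\mapsto y+d$, so
\begin{equation*}
u_{k+1}^m(I)=\int_{\mathbb{R}}\nu_k^m\bigl(y,\{(y',d'):y'+d'\in I\}\bigr)\,du_k^m(y).
\end{equation*}
The slice set $\{(y',d'):y'+d'\in I\}$ is mapped by $(y',d')\mapsto(-y',-d')$ to the analogous slice set for $-I$. Applying (a) inside the integrand and the change of variable $y\mapsto -y$ against the symmetric measure $u_k^m$ then gives $u_{k+1}^m(I)=u_{k+1}^m(-I)$. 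With symmetry of $u_K^m$ in hand, a further use of (a) inside $\mu_K^m(R)=\int\nu_K^m(y,R)\,du_K^m(y)$ produces $\mu_K^m(R)=\mu_K^m(-R)$, establishing (i). Integrating (b) against $u_K^m$ immediately gives (ii).

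Next I would verify that $\mu_K^m(R_1)>0$ for the distinguished $m$ of the theorem. The set $\{(z,-z):z\in\mathbb{R}\}$ is precisely the event $\{Y_m(t_{K-1})+\Delta_{K-1}Y_m=0\}=\{Y_m(t_K)=0\}$, so disintegration and the first added hypothesis yield
\begin{equation*}
u_K^m(\{0\})=\int_{\mathbb{R}}\nu_{K-1}^m\bigl(y,\{(z,-z):z\in\mathbb{R}\}\bigr)\,du_{K-1}^m(y)<1.
\end{equation*}
By the symmetry of $u_K^m$ established above, $u_K^m((0,\infty))>0$. The second added hypothesis reads $\nu_K^m(y,R_1)>0$ for $u_K^m$-almost every $y>0$, so
\begin{equation*}
\mu_K^m(R_1)\geq\int_{(0,\infty)}\nu_K^m(y,R_1)\,du_K^m(y)>0.
\end{equation*}
Thus $\sum_{m=m_1}^{m_2}\mu_K^m(R_1)>0$, and Theorem \ref{T1} at $k=K$ with $\odot=\,>$ delivers the conclusion.

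I expect the main obstacle to be the inductive propagation of symmetry for $u_k^m$: one must carefully handle how the slice structure of $\nu_k^m(y,\cdot)$ transforms under $R\mapsto -R$ and $y\mapsto -y$, and then couple this with the change of variable against the symmetric measure $u_k^m$. Condition (a) is specifically written for this coupling, but the bookkeeping with $\mathcal{R}$-versus-$\mathcal{B}$ measurability and the extension from rectangles to general Borel sets via a $\pi$-$\lambda$ or monotone-class argument is the one nontrivial piece. Once symmetry of $u_K^m$ is secured, the remaining steps---integrating (b), and assembling the two positivity conditions into $\mu_K^m(R_1)>0$---are essentially mechanical.
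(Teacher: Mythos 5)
Your proposal is correct and follows essentially the same route as the paper: the paper's Lemma on conditional measures performs exactly your inductive propagation of marginal symmetry (base case (c), step via (a)) to deduce conditions (i) and (ii) of Theorem \ref{T1} at $k=K$, and the body of the paper's proof establishes $u_K^m(\{0\})<1$, hence $u_K^m((0,\infty))>0$ by symmetry, hence $\mu_K^m(R_1)>0$, just as you do. The only cosmetic difference is that the paper first derives the symmetry of $\mu_k^m$ on rectangles and then extends to Borel sets (applying the extension to the diagonal slice set $\{(y,d_y):y+d_y\in I\}$), whereas you extend (a) itself from rectangles to Borel sets inside the integrand --- both are handled by the same uniqueness-of-extension argument you flag.
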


\begin{corollary}
Fix $K>0$ such that $t_{K+1}\in T$, and $m_1,m_2\in\{1,...,n\}$ such that $m_1\leq m_2$. For $m\in\{m_1,...,m_2\}$, let $Y_m$ be an Ornstein-Uhlenbeck process:
\begin{equation*}
dY_m(t)=-\theta_mY_m(t)dt+\sigma_mdW_m(t),\quad \theta_m,\sigma_m>0,
\end{equation*}
where $Y_m(0)$ is a non-trivial random variable with symmetric distribution about $0$. In addition, require that the $W_m(t)$ are independent Wiener processes. If $t_{k+1}-t_k\geq\frac{\ln 2}{\min\{\theta_1,...,\theta_M\}}$ for all $k\leq K$, then
\begin{equation*}
\mathbb{E}\log\frac{V_{\pi^{m_2}}(t_{K+1})}{V_{\pi^{m_1-1}}(t_{K+1})}>\mathbb{E}\log\frac{V_{\pi^{m_2}}(t_{K})}{V_{\pi^{m_1-1}}(t_{K})}.
\end{equation*}
\label{c:ou}
\end{corollary}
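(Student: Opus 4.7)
The plan is to verify each hypothesis of Theorem~\ref{T3} in the Ornstein--Uhlenbeck setting and then apply that theorem directly. Write $\tau_k=t_{k+1}-t_k$. From the explicit solution of the OU equation, the conditional law of $\Delta_k Y_m$ given $Y_m(t_k)=y$ is Gaussian with mean $\mu_k(y)=y(e^{-\theta_m\tau_k}-1)$ and variance $s_k^2=\sigma_m^2(1-e^{-2\theta_m\tau_k})/(2\theta_m)$, where the variance does not depend on $y$. Since $u_0^m$ is the law of $Y_m(0)$, condition~(c) is immediate from the assumed symmetry of $Y_m(0)$ about $0$. Condition~(a) follows from $\mu_k(-y)=-\mu_k(y)$ together with the $y$-independence of the variance: the conditional joint law of $(Y_m(t_k),\Delta_kY_m)$ given $Y_m(t_k)=-y$ is the image under the antipodal map on $\mathbb{R}^2$ of the corresponding law given $Y_m(t_k)=y$, which is precisely $\nu_k^m(y,R)=\nu_k^m(-y,-R)$.

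The substantive step is (b). Fix $R\in\mathcal{R}_{R_2}$ and $y>0$ (slices at $y\leq 0$ are empty, so both sides vanish). The slice $R_y=\{d:(y,d)\in R\}$ sits in $(-y/2,\infty)$, and a short unpacking of the definition of $R'$ shows that its slice at height $y$ equals $\{-y-d:d\in R_y\}$, the reflection of $R_y$ across the line $d=-y/2$. Thus (b) reduces to the pointwise density comparison $\phi_{\mu_k(y),s_k}(-y-d)\geq \phi_{\mu_k(y),s_k}(d)$ for $d\in R_y$, which, using $2d+y>0$ on $R_y$, is equivalent to $\mu_k(y)\leq-y/2$. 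Dividing by $y>0$, this becomes $e^{-\theta_m\tau_k}\leq 1/2$, i.e.\ $\tau_k\geq \ln 2/\theta_m$, which is exactly what the spacing hypothesis provides for each relevant $m$.

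The two non-degeneracy conditions at $K-1$ and $K$ are immediate from the absolute continuity and full support of the OU conditional law. The anti-diagonal $\{(z,-z):z\in\mathbb{R}\}$ is a one-dimensional affine subset of $\mathbb{R}^2$, so its conditional probability is zero; and for $y>0$ the half-line $\{d\geq-y/2\}$ carries strictly positive Gaussian mass, giving $\nu_K^m(y,R_1)>0$, while for $y\leq 0$ the indicator $\mathcal{X}_{(-\infty,0]}(y)$ already makes the sum positive. With every hypothesis of Theorem~\ref{T3} verified, the corollary follows. I expect the main obstacle to be step~(b): it is the only step that consumes the reversion-strength hypothesis in a sharp way, and the combination of correctly identifying the slice of $R'$ as a reflection through $-y/2$ with the subsequent Gaussian density-ratio comparison is what pins the threshold to the precise constant $\ln 2/\theta_m$.
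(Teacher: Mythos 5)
Your proposal is correct and follows essentially the same route as the paper's proof: both verify conditions (a), (b), (c) and the two non-degeneracy hypotheses of Theorem~\ref{T3} using the explicit Gaussian conditional law of $\Delta_kY_m$ given $Y_m(t_k)=y$, with (a) coming from the odd mean and $y$-independent variance, (b) from the reflection of the density across $d_y=-\tfrac{1}{2}y$ once the mean satisfies $\mu_k(y)\leq-\tfrac{1}{2}y$, which is exactly the spacing condition $t_{k+1}-t_k\geq\ln 2/\theta_m$. The only difference is that you spell out the slice-of-$R'$ reflection argument in more detail than the paper does; the substance is identical.
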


\begin{corollary}
Fix $t_k=k$ for all $k\in\mathbb{N}\cup\{0\}$, and $m_1,m_2\in\{1,...,n\}$ such that $m_1\leq m_2$. For $m\in\{m_1,...,m_2\}$, let $Y_m$ be an order 1 autoregressive process:
\begin{equation*}
Y_m(k+1)=\theta_mY_m(k)+Z_m(k+1),\quad \theta_m\leq\frac{1}{2},
\end{equation*}
where $Y_m(0),\ Z_m(1),\ Z_m(2),\ Z_m(3),\ ...$ are independent, non-trivial random variables with symmetric distribution about $0$. Further require $\mathbb{P}(Z_m(k)>a)>0$ for all $k\in\mathbb{N}$ and $a\in\mathbb{R}$. Then for all $k>0$,
\begin{equation*}
\mathbb{E}\log\frac{V_{\pi^{m_2}}(t_{k+1})}{V_{\pi^{m_1-1}}(t_{k+1})}>\mathbb{E}\log\frac{V_{\pi^{m_2}}(t_{k})}{V_{\pi^{m_1-1}}(t_{k})}.
\end{equation*}
\label{corar}
\end{corollary}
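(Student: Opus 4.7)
The plan is to verify the five hypotheses of Theorem~\ref{T3} for the processes $Y_m$ and then invoke that theorem directly. The key preliminary observation is that, conditional on $Y_m(k)=y$, the increment satisfies $\Delta_k Y_m=(\theta_m-1)y+Z_m(k+1)$, so $\nu^m_k(y,\cdot)$ is the pushforward of the law of $Z_m(k+1)$ under the map $z\mapsto(y,(\theta_m-1)y+z)$. Every subsequent check reduces to a manipulation of this pushforward using the symmetry of $Z_m(k+1)$ about $0$.

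Conditions (a), (c), and the two positivity conditions are routine. For (c), $u_0^m$ is the distribution of $Y_m(0)$, which is symmetric about $0$ by hypothesis, so it is invariant under negation of intervals. For (a), I expand $\nu^m_k(y,R)$ and $\nu^m_k(-y,-R)$ via the pushforward description; replacing $Z_m(k+1)$ by $-Z_m(k+1)$ in one of them, which is permitted by symmetry, shows the two expressions coincide. The condition $\nu^m_{K-1}(y,\{(z,-z):z\in\mathbb{R}\})<1$ translates to $\mathbb{P}(Z_m(K)\neq -\theta_m y)>0$, which holds because $Z_m(K)$ is non-trivial, and the condition $\nu^m_K(y,R_1)+\mathcal{X}_{(-\infty,0]}(y)>0$ is trivial when $y\leq 0$ and reduces for $y>0$ to $\mathbb{P}(Z_m(K+1)\geq y(1/2-\theta_m))>0$, which is forced by the tail hypothesis $\mathbb{P}(Z_m(k)>a)>0$ for all $a\in\mathbb{R}$.

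The main technical step is condition (b). For $R=I_1\times I_2\cap R_2$ and $y>0$ with $y\in I_1$, set $a:=y(1/2-\theta_m)$; the assumption $\theta_m\leq 1/2$ gives $a\geq 0$, which is precisely where the reversion hypothesis on $\theta_m$ enters. The pushforward description yields
\begin{equation*}
\nu^m_k(y,R)=\mathbb{P}\bigl(Z_m(k+1)\in I_2+(1-\theta_m)y,\ Z_m(k+1)>a\bigr).
\end{equation*}
A parallel calculation for $R'$, using that $(y,d)\in R'$ iff $(y,-y-d)\in R$, gives $\nu^m_k(y,R')=\mathbb{P}(Z_m(k+1)\in -\theta_m y-I_2,\ Z_m(k+1)<a)$; replacing $Z_m(k+1)$ by $-Z_m(k+1)$, valid by symmetry, rewrites this as $\mathbb{P}(Z_m(k+1)\in \theta_m y+I_2,\ Z_m(k+1)>-a)$. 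Because $I_2+(1-\theta_m)y=(\theta_m y+I_2)+2a$, setting $M:=(\theta_m y+I_2)\cap(-a,\infty)$ reduces condition (b) to the single inequality $\mathbb{P}(Z_m(k+1)\in M+2a)\leq\mathbb{P}(Z_m(k+1)\in M)$. This final comparison is then closed by reflecting $M+2a\subset(a,\infty)$ across the origin into $-(M+2a)\subset(-\infty,-a)$ and using the symmetry of $Z_m(k+1)$.

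The main obstacle is executing this last mass comparison cleanly, since it is the only place where $\theta_m\leq 1/2$ and the symmetry of $Z_m(k+1)$ must interact in a geometric way; the remaining verifications are essentially bookkeeping around the pushforward description of $\nu^m_k$. Once (b) is in hand, Theorem~\ref{T3} yields the stated inequality immediately.
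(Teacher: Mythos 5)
Your overall route is the same as the paper's: verify hypotheses (a)--(c) and the two positivity conditions of Theorem~\ref{T3} via the representation $(\Delta_kY_m\mid Y_m(k)=y)=(\theta_m-1)y+Z_m(k+1)$. Your treatment of (a), (c) and the two positivity conditions matches the paper's and is fine, and your reduction of (b) to the inequality $\mathbb{P}(Z\in M+2a)\leq\mathbb{P}(Z\in M)$ with $M\subset(-a,\infty)$ and $a=y(\frac{1}{2}-\theta_m)\geq0$ is algebraically correct.

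The gap is in the sentence that claims to close that inequality. Reflecting $M+2a$ across the origin produces $-(M+2a)=-M-2a\subset(-\infty,-a)$, which is \emph{not} $M$; symmetry of $Z_m(k+1)$ only yields $\mathbb{P}(Z\in M+2a)=\mathbb{P}(Z\in -M-2a)$, and you are then left needing $\mathbb{P}(Z\in -M-2a)\leq\mathbb{P}(Z\in -M)$, i.e.\ that translating a set toward the origin by $2a$ cannot decrease its mass. Symmetry alone does not give this, and in fact condition (b) can fail for admissible $Z$: take $\theta_m=0$ and let $Z$ place mass $0.45$ at each of $\pm3$ plus a small Gaussian component (so $Z$ is non-trivial, symmetric, and $\mathbb{P}(Z>a)>0$ for all $a$); then for $y=3$ (an atom of $u_k^m$) and $R=\{(3,0)\}\in\mathcal{R}_{R_2}$ one gets $\nu_k^m(3,R)=\mathbb{P}(Z=3)=0.45$ while $\nu_k^m(3,R')=\mathbb{P}(Z=0)=0$. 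You should know that the paper's own proof asserts (b) from exactly the same two facts (conditional mean at most $-\frac{1}{2}y$, symmetry about that mean) and so contains the identical gap; what actually closes this step in Corollary~\ref{c:ou} is the \emph{unimodality} of the Gaussian density (the point $-y-x$ is closer to the mode than $x$, whence $p_y(x)\leq p_y(-y-x)$), and an analogous unimodality or monotone-density assumption on the law of $Z_m(k)$ appears to be needed here as well. Only in the boundary case $\theta_m=\frac{1}{2}$, where $a=0$ and the reflection about $-\frac{1}{2}y$ coincides with the reflection about the conditional mean, does symmetry alone suffice.
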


\begin{remark}
A special case of Corollary \ref{corar} is when $\theta_m=0$. In this situation, $Y_m$ is just white noise. So even when fluctuations around the fundamentals follow white noise, a semi-fundamental portfolio is expected to outperform a market portfolio.  
\end{remark}

\begin{corollary}
Fix $t_k=k$ for all $k\in\mathbb{N}\cup\{0\}$, and $m_1,m_2\in\{1,...,n\}$ such that $m_1\leq m_2$. For $m\in\{m_1,...,m_2\}$, let $Y_m$ be a discrete-time Markov chain on $\{t_k\}_k$ with state space $S:=\{ks:\ k\in\mathbb{Z}\}$ and 1-step transition probabilities from $t_k$ to $t_{k+1}$, denoted $P_m(y_1,y_2)$, that satisfy:
\begin{enumerate}[(i)]
\item $P_m(k_1s,k_2s)=P_m(-k_1s,-k_2s)$ for all $k_1,k_2\in\mathbb{Z}$,
\item $P_m(k_1s,(k_1+k_2)s)\leq P_m(k_1s,-k_2s)$ for all $k_1\in\mathbb{N}$ and $k_2\in\mathbb{Z}$ such that $k_2>-\frac{1}{2}k_1$,
\item $Y_m(0)$ maps into $S$ and has non-trivial, symmetrical distribution about $0$,
\item $P_m(k_1s,0)<1$ for all $k_1\in\mathbb{Z}$,
\item $\sum_{k_2\geq -\frac{1}{2}k_1}P_m(k_1s,(k_1+k_2)s)>0$ for all $k_1\in\mathbb{N}$.
\end{enumerate}
Then for all $k>0$,
\begin{equation*}
\mathbb{E}\log\frac{V_{\pi^{m_2}}(t_{k+1})}{V_{\pi^{m_1-1}}(t_{k+1})}>\mathbb{E}\log\frac{V_{\pi^{m_2}}(t_{k})}{V_{\pi^{m_1-1}}(t_{k})}.
\end{equation*}
\label{corMarkov} 
\end{corollary}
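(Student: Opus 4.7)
The plan is to apply Theorem \ref{T3} with $K$ equal to the given $k$ by translating each of the Markov-chain hypotheses (i)--(v) into the corresponding hypothesis of Theorem \ref{T3}. Since the state space $S = \{k s : k \in \mathbb{Z}\}$ is closed under the chain and $Y_m(0) \in S$ by (iii), one has $Y_m(t_j) \in S$ almost surely for every $j$, so each $u_j^m$ is supported on $S$ and every ``$u_j^m$-almost everywhere'' clause reduces to a pointwise verification at points $y = k_1 s$, $k_1 \in \mathbb{Z}$. All such verifications start from the identity
\begin{equation*}
\nu_j^m(k_1 s,\ I_1 \times I_2) = \mathcal{X}_{I_1}(k_1 s) \sum_{k_2 \in \mathbb{Z}:\ k_2 s \in I_2} P_m(k_1 s,\ (k_1 + k_2) s),
\end{equation*}
which follows directly from the definition of $\nu_j^m$ and the Markov property.

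Condition (a) follows by comparing the sums for $\nu_j^m(y, R)$ and $\nu_j^m(-y, -R)$ via a change of summation index together with the symmetry $P_m(k_1 s, k_2 s) = P_m(-k_1 s, -k_2 s)$ from (i). Condition (c) at time zero is exactly (iii). Of the two additional hypotheses: (iv) gives $\nu_{K-1}^m(y, \{(z, -z) : z \in \mathbb{R}\}) = P_m(y, 0) < 1$ on $S$; and for $y = k_1 s > 0$, (v) yields $\nu_K^m(y, R_1) = \sum_{k_2 \geq -k_1/2} P_m(k_1 s, (k_1 + k_2) s) > 0$, while for $y \leq 0$ the indicator $\mathcal{X}_{(-\infty, 0]}(y)$ supplies strict positivity.

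The main step, and the only one that is not mechanical, is condition (b), where the reversion hypothesis (ii) is essential. For $R \in \mathcal{R}_{R_2}$, only $y = k_1 s$ with $k_1 \in \mathbb{N}$ contributes, and after the substitution $k_2 \mapsto -k_1 - k_2$ in the sum defining $\nu_j^m(y, R')$, both $\nu_j^m(y, R)$ and $\nu_j^m(y, R')$ become sums indexed over the common set $\{k_2 \in \mathbb{Z} : k_2 > -k_1/2,\ k_2 s \in I_2\}$, with summands $P_m(k_1 s, (k_1 + k_2) s)$ and $P_m(k_1 s, -k_2 s)$, respectively. Condition (ii) provides the term-wise inequality between these, and summing gives $\nu_j^m(y, R) \leq \nu_j^m(y, R')$. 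Once this index bookkeeping is in place, all hypotheses of Theorem \ref{T3} are verified, and the desired strict inequality follows immediately.
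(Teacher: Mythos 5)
Your proposal is correct and follows essentially the same route as the paper: both reduce the corollary to Theorem \ref{T3} by computing $\nu_j^m(k_1s,\cdot)$ as a sum of one-step transition probabilities and matching hypotheses (i)--(v) to (a), (b), (c) and the two extra conditions. Your treatment of the index substitution $k_2\mapsto -k_1-k_2$ in the verification of (b) is in fact slightly more careful than the paper's, which elides that bookkeeping.
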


\begin{theorem}
Fix $k\geq 0$ such that $t_{k+1}\in T$, $\delta_1,\delta_2>0$ and $m_1,m_2\in\{1,...,n\}$ such that $m_1\leq m_2$. Suppose the following conditions hold for all $m\in\{m_1,...,m_2\}$. There is a function $r^{m}_k:\ \{(y,d_y)\in\mathbb{R}^2:\ y>0,\  d_y>-\frac{1}{2}y\}\to[0,1)$ such that
\begin{equation*}
\frac{1}{2}\left[1-\exp(-2\delta_1-\delta_2)\frac{\exp y+\exp(-y)-2}{\exp(y+d_y)+\exp(-y-d_y)-\exp d_y-\exp(-d_y)}\right]<r_k^{m}(y,d_y).
\end{equation*}
And
\begin{enumerate}[(i)]
\item $\mu^{m}_k(R)=\mu^{m}_k(-R)$ for all $R\in\mathcal{R}$.
\item $\mu^{m}_k(R')\geq\int_{R}\frac{r_k^m}{1-r_k^m}d\mu_k^m$ for all $R\in\mathcal{R}_{R_2}$.
\item $-\delta_1\leq \Delta_k\log F_{m}\leq \delta_1$.
\item $-\delta_2\leq \Delta_kY_{m}\leq\delta_2$.
\end{enumerate}
Then $\sum_{m=m_1}^{m_2}\mu^{m}_k(R_1)>0$ implies
\begin{equation*}
\mathbb{E}\log\frac{V_{\pi^{m_2}}(t_{k+1})}{V_{\pi^{m_1-1}}(t_{k+1})}>\mathbb{E}\log\frac{V_{\pi^{m_2}}(t_{k})}{V_{\pi^{m_1-1}}(t_{k})}.
\end{equation*}
\label{T2} 
\end{theorem}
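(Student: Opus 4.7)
The plan is to follow the same high-level strategy as Theorem~\ref{T1}---reduce to a one-step, one-stock symmetry pairing---but to close the argument with a pointwise estimate designed to exploit the weakened hypothesis~(ii) through the explicit $r_k^m$ bound. By Lemma~\ref{l:dlogV}, the target difference equals $\mathbb{E}\log\bigl(\sum_i\pi_i^{m_2}(t_k)R_i\big/\sum_i\pi_i^{m_1-1}(t_k)R_i\bigr)$, where $R_i=\exp\Delta_k\log X_i$. Telescoping the log-ratio over $m\in\{m_1,\ldots,m_2\}$ reduces this to one-stock swaps: $\pi^{m-1}$ and $\pi^m$ differ only in whether stock $m$ is weighted by $X_m(t_k)$ or $F_m(t_k)$, so the $m$-th summand simplifies to a function $g_m(Y_m(t_k),\Delta_kY_m)$ depending only on $(Y_m(t_k),\Delta_kY_m,\Delta_k\log F_m)$ and on the positive constants $u:=\beta/F_m(t_k)$, $v:=\alpha/F_m(t_k)$ built from the remaining stocks.

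By the independence of $Y_m$ from $(Y_i)_{i\neq m}$, I would condition on the latter (treating $u,v$ as fixed) and apply the symmetry hypothesis~(i) to write the inner expectation as $\int_{\{y>0\}}h(y,d_y)\,d\mu_k^m$ with $h(y,d_y):=g_m(y,d_y)+g_m(-y,-d_y)$. A direct computation yields
\begin{equation*}
h(y,d_y)=\log\frac{P(d_y)\,Q(y)}{Q(0)\,P(d_y+y)},\ \ P(t)=u^2+2uw\cosh t+w^2,\ \ Q(s)=v^2+2v\cosh s+1,
\end{equation*}
where $w=\exp\Delta_k\log F_m$. Since $P$ is even and strictly increasing in $|t|$, $h>0$ on $R_3:=\{y>0,\,d_y<-y/2\}$ and on the boundary $\{d_y=-y/2,\,y>0\}$ where $h=\log Q(y)/Q(0)>0$; only on $R_2$ can $h$ be negative.

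Next I would introduce the involution $\psi(y,d_y)=(y,-y-d_y)$, which bijects $R_2$ with $R_3$ and agrees with the set map $R\mapsto R'$. The identities $h\circ\psi+h=2\log Q(y)/Q(0)$ and $h\circ\psi-h=2\log P(y+d_y)/P(d_y)>0$ on $R_2$ both follow from $P$ being even. Extending the rectangle statement in~(ii) to all Borel subsets of $R_2$ gives the measure domination $\psi^{-1}_{*}\mu_k^m|_{R_3}\geq \tfrac{r_k^m}{1-r_k^m}\,\mu_k^m|_{R_2}$, so testing against the non-negative function $h\circ\psi$ produces $\int_{R_3}h\,d\mu_k^m\geq\int_{R_2}\tfrac{r_k^m}{1-r_k^m}(h\circ\psi)\,d\mu_k^m$. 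The argument therefore reduces to the pointwise inequality $h(y,d_y)+\tfrac{r_k^m}{1-r_k^m}h(\psi(y,d_y))\geq 0$ on $R_2$, equivalently $1-2r_k^m(y,d_y)\leq \log Q(y)/Q(0)\big/\log P(y+d_y)/P(d_y)$.

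The main obstacle is this pointwise inequality, for which the hypothesis on $r_k^m$ reduces the burden to proving
\begin{equation*}
\frac{\log Q(y)/Q(0)}{\log P(y+d_y)/P(d_y)}\ \geq\ e^{-2\delta_1-\delta_2}\,\frac{\sinh(y/2)}{\sinh(d_y+y/2)}\text{ on }R_2.
\end{equation*}
I would expand $Q(y)/Q(0)-1=4v\sinh^2(y/2)/(v+1)^2$ and $P(y+d_y)/P(d_y)-1=4uw\sinh(y/2)\sinh(d_y+y/2)/P(d_y)$, then apply the concavity-based ratio estimate $\log(1+x_1)/\log(1+x_2)\geq x_1/x_2$ valid for $0<x_1\leq x_2$, together with the AM-GM bound $P(d_y)\geq 4uw\cosh^2(d_y/2)$. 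The residual scalar factor involving $u,v,w$ is controlled by the boundary conditions~(iii),(iv): propagated through $\alpha$ and $\beta$ they force $u/(vw)\in[e^{-2\delta_1-\delta_2},e^{2\delta_1+\delta_2}]$, delivering precisely the factor $e^{-2\delta_1-\delta_2}$. Strictness in the conclusion finally comes from $\sum_m\mu_k^m(R_1)>0$: for such an $m$ there is positive mass either on the boundary $\{d_y=-y/2\}$ (where $h>0$ directly) or in the interior of $R_2$ (where the strict inequality satisfied by $r_k^m$ forces strict positivity of $h+\tfrac{r_k^m}{1-r_k^m}(h\circ\psi)$).
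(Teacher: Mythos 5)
Your reduction is sound and tracks the paper's own proof up to the last step: the telescoping over $m$, the symmetrization $h=g_m+g_m\circ(-\mathrm{id})$, the identification $R_2'=\{y>0,\ d_y<-y/2\}$, the Borel extension of (ii) and the resulting bound $\int_{R_2\cup R_2'}h\,d\mu_k^m\geq\int_{R_2}\bigl[h+\tfrac{r}{1-r}(h\circ\psi)\bigr]d\mu_k^m$, and the equivalence of the pointwise goal with $1-2r\leq\log\frac{Q(y)}{Q(0)}\big/\log\frac{P(y+d_y)}{P(d_y)}$ are all correct and essentially identical to the argument in the appendix. The gap is in the final analytic step. Your intermediate target, stated ``on $R_2$,'' is false there: take $a:=\phi(A_k,F_m(t_k))=b:=\phi(B_k,F_m(t_{k+1}))=2$, $y=10$, $d_y=-4.999$; then $\log\frac{Q(y)}{Q(0)}\big/\log\frac{P(y+d_y)}{P(d_y)}\approx 8.61/0.00197\approx 4.4\times 10^3$ while $\frac{\sinh(y/2)}{\sinh(d_y+y/2)}\approx 7.4\times 10^4$, so the claimed inequality fails whenever $e^{-2\delta_1-\delta_2}>0.06$. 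The inequality is only salvageable on the set $\{|d_y|\leq\delta_2\}$ where condition (iv) confines the measure; your proposal never uses (iv) as a pointwise restriction on $d_y$ (only to bound $u/(vw)$), so this restriction is missing from your argument. Worse, even on that restricted set your derivation breaks: the estimate $\log(1+x_1)/\log(1+x_2)\geq x_1/x_2$ holds only for $x_1\leq x_2$, and as $d_y\downarrow -y/2$ (still admissible under (iv) whenever $y\leq 2\delta_2$) one has
\begin{equation*}
x_2=\frac{4uw\sinh(y/2)\sinh(d_y+y/2)}{P(d_y)}\longrightarrow 0
\end{equation*}
while $x_1=4v\sinh^2(y/2)/(v+1)^2$ stays bounded away from $0$, so $x_1>x_2$ and the cited inequality reverses. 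The subsequent AM--GM bound and the $u/(vw)\in[e^{-2\delta_1-\delta_2},e^{2\delta_1+\delta_2}]$ bookkeeping cannot repair a first step that points the wrong way.

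For comparison, the paper closes this step differently: it first uses monotonicity in the first argument of $\log\frac{\,\cdot\,+2\cosh y}{\,\cdot\,+2}$ together with $a\leq be^{\delta}$ to replace $a$ by $xe^{\delta}$ with $x=b$, then shows that $x\mapsto N(x)-(1-2r)D(x)$ is non-increasing on $(0,\infty)$ --- and it is exactly here that (iv) is used pointwise, via $2\cosh(y)e^{-\delta}\leq 2\cosh(y+d_y)$ and $2e^{-\delta}\leq 2\cosh d_y$, to dominate one denominator by the other --- so that the infimum is the $x\to\infty$ limit, which L'Hopital identifies with the hypothesis threshold $e^{-\delta}\sinh(y/2)/\sinh(d_y+y/2)$. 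If you want to keep your more elementary route, you must (a) restrict the pointwise claim to $R_2\cap\{|d_y|\leq\delta_2\}$ and (b) replace the $\log(1+x_1)/\log(1+x_2)\geq x_1/x_2$ step by an argument valid when $x_1>x_2$; as written, the proof does not go through.
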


\begin{remark}
It is tempting to try to modify Theorem \ref{T2} so the result is reversed. However, this is futile because it requires reversing the inequality in \eqref{dk}. It is easy to see that there is a neighborhood $d_y\in(-\frac{1}{2}y,-\frac{1}{2}y+\epsilon)$ on which \eqref{dk} with the reversed inequality is not satisfied.
\end{remark}

\begin{theorem}
Fix $n=2$, $T=\{0,1\}$, $s>0$, $F_1(0)=F_1(1)=1$ and $Y_2(0)=Y_2(1)=0$. There exists $A>0$ and $Y_1$ with $\mathbb{P}(Y_1(0)=s)=\mathbb{P}(Y_1(0)=-s)=\frac{1}{2}$ and 1-step transition probabilities from time $0$ to time $1$, denoted $M(y_1,y_2)$, satisfying:
\begin{enumerate}[(i)]
\item $M(s,(1\pm1)s)=M(-s,-(1\pm1)s)$,
\item $M(s,2s)<M(s,0)$,
\item $M(s,2s)+M(s,0)=1$,
\end{enumerate}
such that if $F_2(0)=F_2(1)=A$, then
\begin{equation*}
\mathbb{E}\log\frac{V_{\pi^{2}}(1)}{V_{\pi^0}(1)}<\mathbb{E}\log\frac{V_{\pi^{2}}(0)}{V_{\pi^0}(0)}.
\end{equation*}
\label{Tcounter}
\end{theorem}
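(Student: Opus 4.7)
The plan is to enumerate the four possible sample paths of $X_1$, reduce the expected log-ratio to a closed-form algebraic expression, and then produce a suitable $p\in(1/2,1)$ and $A>0$ making it negative. First, because $V_\pi(0)=1$ for every portfolio, the target reduces to $\mathbb{E}\log(V_{\pi^2}(1)/V_{\pi^0}(1))<0$. Since $F_2\equiv A$ and $Y_2\equiv 0$, the stock-2 price is constant at $A$, and the two-stock value recursion from Lemma \ref{l:logV} simplifies to
\begin{equation*}
V_{\pi^0}(1)=\frac{X_1(1)+A}{X_1(0)+A},\qquad V_{\pi^2}(1)=\frac{X_1(1)+AX_1(0)}{X_1(0)(1+A)},
\end{equation*}
so the log-ratio is a deterministic function of $(X_1(0),X_1(1))$.

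Let $p:=M(s,0)$. Condition (iii) gives $M(s,2s)=1-p$, condition (ii) forces $p>1/2$, and condition (i) makes the $Y_1(0)=-s$ branch a mirror image. Hence $(X_1(0),X_1(1))$ takes the four values $(e^s,1)$, $(e^s,e^{2s})$, $(e^{-s},1)$, $(e^{-s},e^{-2s})$ with probabilities $p/2$, $(1-p)/2$, $p/2$, $(1-p)/2$. Pairing the symmetric $\pm s$ scenarios and using the identities $(1+Ae^s)(1+Ae^{-s})=(e^s+A)(e^{-s}+A)=1+Ac+A^2$ where $c:=e^s+e^{-s}$, the sum collapses to
\begin{equation*}
\mathbb{E}\log\frac{V_{\pi^2}(1)}{V_{\pi^0}(1)}=\log u-\frac{1+p}{2}\log v-\frac{1-p}{2}\log w,
\end{equation*}
with $u=1+Ac+A^2$, $v=(1+A)^2=1+2A+A^2$, and $w=1+A(c^2-2)+A^2$.

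To produce the counterexample, let $A\to\infty$ and Taylor-expand: each of $\log u,\log v,\log w$ equals $2\log A$ plus a correction $c/A$, $2/A$, $(c^2-2)/A$ respectively, modulo $O(1/A^2)$. Because the weights $1$, $(1+p)/2$, $(1-p)/2$ satisfy $1-(1+p)/2-(1-p)/2=0$, the leading $\log A$ terms cancel, leaving
\begin{equation*}
\mathbb{E}\log\frac{V_{\pi^2}(1)}{V_{\pi^0}(1)}=\frac{1}{A}\left[c-(1+p)-\frac{(1-p)(c^2-2)}{2}\right]+O(1/A^2).
\end{equation*}
At $p=1/2$ the bracket simplifies to $-(c-2)^2/4$, which is strictly negative because $s>0$ implies $c>2$. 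Since the bracket is linear in $p$ with positive slope $(c^2-4)/2$, it stays negative on the open interval $p\in(1/2,c/(c+2))$; fixing any such $p$ (which meets (ii) and, with $M(s,2s)=1-p$, also (iii)) and then choosing $A$ large enough yields the strict inequality.

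The main obstacle is the algebraic collapse in the second paragraph: one must spot that $(1+Ae^s)(1+Ae^{-s})$ and $(e^s+A)(e^{-s}+A)$ both equal $u$, so the two $X_1(1)=1$ outcomes merge into a single $u$-factor, while the two $X_1(1)=e^{\pm 2s}$ outcomes each reduce to a clean $(X_1(0)+A)^2/[(1+A)(X_1(1)+A)]$ term. After that, the asymptotic analysis and the identity $-(c-2)^2/4<0$ are purely routine.
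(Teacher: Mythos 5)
Your proposal is correct and follows essentially the same route as the paper's proof: both exploit the $\pm s$ symmetry to pair the four sample paths into a single expression in $A$, then let $A\to\infty$ to find that underperformance occurs exactly when $M(s,2s)$ exceeds the threshold $\tfrac{2}{(e^s+1)(1+e^{-s})}=\tfrac{2}{c+2}<\tfrac12$, which is the same window $M(s,2s)\in(\tfrac{2}{c+2},\tfrac12)$ you identify via $p\in(\tfrac12,\tfrac{c}{c+2})$. The only difference is presentational: the paper reuses the $h^m_k$ decomposition from the proof of Theorem \ref{T1} and takes the limit by L'Hopital, whereas you rederive the value ratios from the recursion and expand directly in $1/A$.
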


\begin{remark}
Theorem \ref{Tcounter} just gives existence of a $Y_1$ that satisfies $M(s,2s)<M(s,0)$. It does not give a value of $M(s,2s)$, and not all values lead to the fundamental portfolio underperforming the market portfolio in expectation. For example, $M(s,2s)=0$ and $M(s,0)=1$ leads to outperformance in expectation by Theorem \ref{T1}. 
\end{remark}

\section{Conclusions \& Further Research}\label{conclusion}
It remains to be seen whether the assumptions and conditions of results can be satisfied and exploited in practice. Empirical validation would make this theory far more convincing. Some treatment of dividends would also help, since the results do not factor in dividends. The following paragraphs provide some ideas for empirical validation and the treatment of dividends.

The theory presented here assumes existence of a fundamental price around which the stock price reverts. So in order to take advantage of the results in practice, the fundamental price must be determined. One option is to use some combination of metrics like B/M, E/P, C/P and D/P. Another option is to try determining the fundamental price with a Bayesion method, in the spirit of a hidden Markov model; one possibility is a particle filter like in \cite{kitagawa1996monte}. 

A serious disadvantage of this theory is that it does not address dividends. If the dividends from the fundamental portfolio eventually surpass the dividends of the market portfolio, then there is no problem. Perhaps some construction of the fundamental price that uses D/P will provide this outperformance in dividends. Of course, emperical evidence will be needed to verify that the stock price is reverting around the fundamental price with the necessary strength and symmetry. 

\begin{appendices}

\section{Proofs}\label{appendix}
\subsection*{Lemma \ref{l:logV}}
\begin{proof}
The result is easily obtained from Definition \ref{d:V}.
\end{proof}

\subsection*{Lemma \ref{l:dlogV}}
\begin{proof}
The expression of $\Delta_k\log\frac{V_\pi}{V_\eta}$ is easily obtained from Lemma \ref{l:logV}.

Let $t_j,t_k\in T$ such that $t_j<t_k$. Since the $X_i$ are positive $\mathcal{A}$-measurable functions, it follows that $\frac{X_i(t_{j+1})}{X_i(t_j)}$ is $\mathcal{A}$-measurable for each $i$. The $\mathcal{A}$-measurable functions are closed under finite linear combination, so $\sum_{i=1}^n\pi_i(t_j)\frac{X_i(t_{j+1})}{X_i(t_j)}$ is $\mathcal{A}$-measurable. Using recursion on Definition \ref{d:V}, it is now clear that $\frac{V_\pi(\boldsymbol\omega,t_k)}{V_\eta(\boldsymbol\omega,t_k)}$ is a finite product of $\mathcal{A}$-measurable functions, making itself $\mathcal{A}$-measurable. Lastly, observe that $\log:(0,\infty)\to\mathbb{R}$ is Borel measurable, making $\log\frac{V_\pi(\boldsymbol\omega,t_k)}{V_\eta(\boldsymbol\omega,t_k)}$ $\mathcal{A}$-measurable.

Now for integrability. First observe that 
\begin{equation*}
\log\frac{V_\pi(\boldsymbol\omega,t_k)}{V_\eta(\boldsymbol\omega,t_k)}=\log V_\pi(\boldsymbol\omega,t_k)-\log V_\eta(\boldsymbol\omega,t_k)=\sum_{j=0}^{k-1}\Delta_j\log V_\pi-\Delta_j\log V_\eta.
\end{equation*}
Let $t_j\in T$ with $t_j<t_k$. Since integrable functions are closed under finite linear combination, it suffices to show $\Delta_j\log V_\pi$ is integrable for some arbitrary portfolio $\boldsymbol\pi$ defined on $T$. Using Definition \ref{d:V},
\begin{equation*}
\Delta_j\log V_\pi=\log\sum_{i=1}^n\pi_i(t_j)\frac{F_i(t_{j+1})}{F_i(t_j)}\exp(\Delta_jY_i).
\end{equation*}
By the concavity of $\log$, for each $\mathbf{a,x}\in(0,\infty)^n$,
\begin{equation}
\vert \log\sum_{i=1}^na_ix_i\vert \leq\max\{\sum_{i=1}^na_ix_i,\ \frac{1}{n}\sum_{i=1}^n\log (a_ix_i)\}\leq\vert \sum_{i=1}^na_ix_i\vert +\vert \sum_{i=1}^n\log (a_ix_i)\vert .
\label{e:loglinub}
\end{equation}
Using \eqref{e:loglinub}, it now suffices to show the following functions are $\mathcal{A}$-measurable and integrable:
\begin{equation}
\sum_{i=1}^n\pi_i(t_j)\frac{F_i(t_{j+1})}{F_i(t_j)}\exp(\Delta_tY_i),\quad\sum_{i=1}^n\log\Big(\pi_i(t_j)\frac{F_i(t_{j+1})}{F_i(t_j)}\Big)+\Delta_jY_i.
\label{e:dblfnc}
\end{equation}
$\mathcal{A}$-measurability follows from logic similar to what was used in showing $\log\frac{V_\pi(\boldsymbol\omega,t_k)}{V_\eta(\boldsymbol\omega,t_k)}$ is $\mathcal{A}$-measurable. 
By assumption, the expectations of $\vert Y_i(t)\vert $ and $\exp Y_i(t)$ are finite for each $i\in\{1,...,n\}$ and $t\in\mathbb{R}$. Therefore the expectations of $\vert \Delta_jY_i\vert $ and $\exp(\Delta_jY_i)$ are finite for $i=1,...,n$ and $j=0,...,k-1$. This means $\Delta_jY_i$ and $\exp(\Delta_jY_i)$ are integrable functions. Since $\pi_i$ and $F_i$ are positive, deterministic functions, it follows from linearity of the integral that both functions in \eqref{e:dblfnc} are integrable. 
\end{proof}

\begin{definition}
For each $t_{k+1}\in T$, $k\geq0$ and $m\in\{1,...,n\}$, let $f_k^{m}(Y_{m}(t_k),\ \Delta_kY_{m})=\Delta_k\log\frac{V_{\pi^{m}}}{V_{\pi^{m-1}}}$. Further, define the function
\begin{equation*}
h^{m}_k:\ \mathbb{R}^2\to\mathbb{R},\quad h^{m}_k(y,d_y)=f_k^{m}(y,d_y)+f_k^{m}(-y,-d_y).
\end{equation*}
To simplify notation, also define the function $\phi:(0,\infty)^2\to (0,\infty)^2$ such that 
\begin{equation*}
\phi(x,y)=\frac{x}{y}+\frac{y}{x}.
\end{equation*}
\end{definition}

\begin{lemma}
Let $(\Omega,d)$ be a metric space and $(\Omega,\mathcal{C})$ be a measurable space, on which two measures, $\mu$ and $\nu$, are defined. Let $h:\Omega\to\Omega$ be a continuous, injective function and $f,g:\Omega\to\mathbb{R}$ be $\mathcal{C}$-measurable functions. Let $R\in\mathcal{C}$. Suppose that for all $A\in\mathcal{C}_R$
\begin{equation*}
\mu(A)\odot\int_{h(A)}gd\nu,
\end{equation*}
If $\odot\in\{=\}$, then
\begin{equation*}
\int_Rfd\mu\odot\int_{h(R)}(f\circ h^{-1})gd\nu.
\end{equation*}
If $f$ is non-negative on $R$, then the result holds for $\odot\in\{<,\leq,=,\geq,>\}$. 
\label{l:integral}
\end{lemma}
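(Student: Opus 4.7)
The plan is to use the standard measure-theoretic bootstrap from indicators to simple functions to non-negative measurable functions, and finally to general measurable functions in the equality case. First, I would note that the hypothesis, applied to a set $A\in\mathcal{C}_R$, is precisely the conclusion for the indicator $f=\mathbf{1}_A$: since $\int_R\mathbf{1}_A\,d\mu=\mu(A)$, and since the injectivity of $h$ gives $\mathbf{1}_A\circ h^{-1}=\mathbf{1}_{h(A)}$ on $h(R)$, we have $\int_{h(R)}(\mathbf{1}_A\circ h^{-1})g\,d\nu=\int_{h(A)}g\,d\nu$.

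Next I would extend to non-negative simple functions. For $f=\sum_{i=1}^N c_i\mathbf{1}_{A_i}$ with $c_i\geq 0$ and $A_i\in\mathcal{C}_R$, linearity of the integral plus the fact that multiplying by the non-negative constants $c_i$ and summing preserves any of the relations $<,\leq,=,\geq,>$ yields the conclusion. The equality case $\odot\in\{=\}$ of course needs no sign hypothesis on $f$ at the simple-function level, but the inequality cases require $c_i\geq 0$, which is why the statement assumes $f\geq 0$ on $R$ for $\odot\in\{<,\leq,\geq,>\}$.

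Third, I would handle general non-negative measurable $f$ via a standard approximation $f_n\nearrow f$ by non-negative simple functions supported on $R$. The monotone convergence theorem applied to both $\int_R f_n\,d\mu$ and $\int_{h(R)}(f_n\circ h^{-1})g\,d\nu$ preserves each of $<,\leq,=,\geq,>$ in the limit, giving the inequality version. For the equality case with a general measurable $f$, decompose $f=f^+-f^-$, apply the non-negative result to each piece, and subtract; linearity gives the claim provided the integrals are not of the form $\infty-\infty$, which is part of the tacit assumption that the expressions in the statement are well defined.

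The main subtlety, rather than the bootstrap itself, is the measurability bookkeeping: one must know that $h(A)\in\mathcal{C}$ whenever $A\in\mathcal{C}_R$ (so that $\int_{h(A)}g\,d\nu$ makes sense), and that $f\circ h^{-1}$ is $\mathcal{C}$-measurable on $h(R)$. Both are implicit in the way the lemma is posed: the injectivity of $h$ on $R$ makes $h^{-1}$ well defined on $h(R)$, and the hypothesis itself presumes $h(A)$ is in the domain of $\nu$. I would invoke these as standing assumptions rather than prove them, since they follow from the hypothesis that every integral appearing in the statement is defined. Once these technicalities are dispatched, the remainder is a textbook layered argument.
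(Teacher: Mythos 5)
Your proposal is correct and follows essentially the same route as the paper, whose own proof is just a sketch of the identical bootstrap: indicators, then non-negative simple functions, then non-negative measurable functions by monotone approximation, then the equality case for general $f$ via $f=f^+-f^-$. The only cosmetic difference is that the paper justifies $h(A)\in\mathcal{C}$ by appealing to continuity and injectivity of $h$ (i.e.\ the Lusin--Suslin fact) rather than treating it as implicit in the hypotheses, as you do.
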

\begin{proof}
Observe that $A\in\mathcal{C}$ implies $h(A)\in\mathcal{C}$ because $h$ is continuous and injective. The remainder of the proof is a bootstrapping argument, where the result is shown when $f$ is a characteristic function, then a non-negative step function, then a non-negative $\mathcal{C}$-measurable functions using the mean value theorem and then a $\mathcal{C}$-measurable function by taking the difference of the positive and negative parts of $f$.
\end{proof}

\subsection*{Theorem \ref{T1}}
\begin{proof}
Using a property of $\log$ and linearity of $\mathbb{E}$,
\begin{equation*}
\mathbb{E}\log\frac{V_{\pi^{m_2}}(t_{k+1})}{V_{\pi^{m_1-1}}(t_{k+1})}=\sum_{m=m_1}^{m_2}\mathbb{E}\log\frac{V_{\pi^{m}}(t_{k+1})}{V_{\pi^{m-1}}(t_{k+1})}.
\end{equation*}
Next observe that for $m\in\{m_1,...,m_2\}$, 
\begin{equation}
\log\frac{V_{\pi^{m}}(t_{k+1})}{V_{\pi^{m-1}}(t_{k+1})}=\Delta_k\log\frac{V_{\pi^{m}}}{V_{\pi^{m-1}}}+\log\frac{V_{\pi^{m}}(t_k)}{V_{\pi^{m-1}}(t_k)}.
\label{e:dklv}
\end{equation}
So by the linearity of $\mathbb{E}$, it suffices to show $\sum_{m=m_1}^{m_2}\mathbb{E}\Delta_k\log\frac{V_{\pi^{m}}}{V_{\pi^{m-1}}}\odot 0$. If $\odot\in\{\geq\}$, it is sufficient to show $\mathbb{E}\Delta_k\log\frac{V_{\pi^{m}}}{V_{\pi^{m-1}}}\geq 0$ for all $m\in\{m_1,...,m_2\}$. If $\odot\in\{>\}$, then in addition, there must be at least one $m\in\{m_1,...,m_2\}$ such that $\mathbb{E}\Delta_k\log\frac{V_{\pi^{m}}}{V_{\pi^{m-1}}}>0$ .

Suppose $\odot\in\{>\}$. Then fix $m\in\{m_1,...,m_2\}$ such that $\mu^{m}_k(R_1)>0$. From here, the goal is to show $\mathbb{E}\Delta_k\log\frac{V_{\pi^{m}}}{V_{\pi^{m-1}}}>0$. Since $\Delta_k\log\frac{V_{\pi^{m}}}{V_{\pi^{m-1}}}$ is integrable over $\boldsymbol\Omega$ (Lemma \ref{l:dlogV}) and $(\boldsymbol\Omega,\mathcal{A},\mathbb{P})$ is a product measure space, it follows that 
\begin{equation*}
\begin{split}
\mathbb{E}\Delta_k\log\frac{V_{\pi^{m}}}{V_{\pi^{m-1}}}&=\int_{\boldsymbol\Omega}\Delta_k\log\frac{V_{\pi^{m}}}{V_{\pi^{m-1}}}d\mathbb{P}\\
&=\int_{\times_{i\neq m}\Omega_i}\left[\int_{\Omega_{m}}\Delta_k\log\frac{V_{\pi^{m}}}{V_{\pi^{m-1}}}d\mathbb{P}_{m}\right]d\otimes_{i\neq m}\mathbb{P}_i\\
&=\int_{\times_{i\neq m}\Omega_i}\left[\int_{\mathbb{R}^2}f_k^{m}(y,d_y)d\mu_k^{m}(y,d_y)\right]d\otimes_{i\neq m}\mathbb{P}_i.
\end{split}
\end{equation*}
Note that $f_k^{m}$ also depends on $\times_{i\neq m}\Omega_i$. So it suffices to show $\int_{\mathbb{R}^2}f_k^{m}d\mu_k^{m}>0$ for all $\omega_i\in\Omega_i$, $i\neq m$.

Fix $\omega_i\in\Omega_i$ for all $i\neq m$. Define the regions 
\begin{equation*}
R_+=(0,\infty)\times\mathbb{R},\quad R_0=\{0\}\times\mathbb{R}.
\end{equation*}
Observe that 
\begin{equation*}
\int_{\mathbb{R}^2}f_k^{m}d\mu_k^{m}=\sum_{R\in\{R_+,R_0,-R_+\}}\int_{R}f_k^{m}d\mu_k^{m}.
\end{equation*}
Lemma \ref{l:integral} combined with $(i)$ and Lemma \ref{iii} gives 
\begin{equation*}
\int_{-R_+}f_k^{m}(y,d_y)d\mu_k^{m}(y,d_y)=\int_{R_+}f_k^{m}(-y,-d_y)d\mu_k^{m}(y,d_y).
\end{equation*}
By \eqref{dlogVbest}, which is described and justified later in this proof, it is clear that $f_k^{m}(y,d_y)=0$ whenever $y=0$. Therefore $\int_{R_0}f_k^{m}d\mu_k^{m}=0$. It follows that 
\begin{equation}
\begin{split}
\int_{\mathbb{R}^2}f_k^{m}d\mu_k^{m}&=\int_{R_+}[f_k^{m}(y,d_y)+f_k^{m}(-y,-d_y)]d\mu_k^{m}(y,d_y)\\
&=\int_{R_+}h_k^{m}d\mu_k^{m}.
\end{split}
\label{R1}
\end{equation}

From here, some useful characteristics of $h_k^{m}(y,d_y)$ are uncovered. Then it will be possible to do more with \eqref{R1}. Using Lemma \ref{l:dlogV} and substitution,
\begin{equation}
\Delta_k\log\frac{V_{\pi^{m}}}{V_{\pi^{m-1}}}=\log\frac{\sum_{j=1}^n\lambda_j^{m-1}(t_k)\sum_{i=1}^n\lambda_i^{m}(t_k)\exp\Delta_k\log X_i}{\sum_{j=1}^n\lambda_j^{m}(t_k)\sum_{i=1}^n\lambda_i^{m-1}(t_k)\exp\Delta_k\log X_i}.
\label{dlogV2}
\end{equation}
Now let 
\begin{equation*}
\begin{split}
A_k&=\sum_{i=1}^{m-1}F_i(t_k)+\sum_{i=m+1}^nX_i(t_k)\\
B_k&=\sum_{i=1}^{m-1}F_i(t_k)\exp\Delta_k\log X_i+\sum_{i=m+1}^nX_i(t_k)\exp\Delta_k\log X_i.
\end{split}
\end{equation*}
Note that $\sum_{i=1}^{m-1}\cdot=0$ if $m=1$. Now \eqref{dlogV2} can be written as
\begin{equation}
\begin{split}
\Delta_k\log\frac{V_{\pi^{m}}}{V_{\pi^{m-1}}}&=\log\frac{A_k+X_{m}(t_k)}{A_k+F_{m}(t_k)}\\
&\quad +\log\frac{B_k+F_{m}(t_k)\exp\Delta_k\log X_{m}}{B_k+X_{m}(t_k)\exp\Delta_k\log X_{m}}.
\end{split}
\label{dlogV3}
\end{equation}
Recall that $X_{m}(t_k)=F_{m}(t_k)\exp(Y_{m}(t_k))$ and $\exp\Delta_k\log F_{m}=\frac{F_{m}(t_{k+1})}{F_{m}(t_k)}$. Then \eqref{dlogV3} becomes
\begin{equation}
\begin{split}
\Delta_k\log\frac{V_{\pi^{m}}}{V_{\pi^{m-1}}}&=\log\frac{A_k+F_{m}(t_k)\exp(Y_{m}(t_k))}{A_k+F_{m}(t_k)}\\
&\quad +\log\frac{B_k+F_{m}(t_{k+1})\exp\Delta_kY_{m}}{B_k+F_{m}(t_{k+1})\exp(Y_{m}(t_k)+\Delta_kY_{m})}.
\end{split}
\label{dlogVbest}
\end{equation}
Using substitution and logorithmic properties,
\begin{equation*}
\begin{split}
h^{m}_k(y,d_y)&=\log\frac{[A_k+F_{m}(t_k)\exp y][A_k+F_{m}(t_k)\exp(-y)]}{[A_k+F_{m}(t_k)]^2}\\
&\quad +\log\frac{[B_k+F_{m}(t_{k+1})\exp d_y][B_k+F_{m}(t_{k+1})\exp(-d_y)]}{[B_k+F_{m}(t_{k+1})\exp(y+d_y)][B_k+F_{m}(t_{k+1})\exp(-y-d_y)]}.
\end{split}
\end{equation*}
Multiplication and further manipulation reveals that
\begin{equation*}
h^{m}_k(y,d_y)=\log\frac{\phi(A_k,F_{m}(t_k))+\exp y+\exp(-y)}{\phi(A_k,F_{m}(t_k))+2}+g(y,d_y),
\end{equation*}
where $g$ is defined as
\begin{equation*}
g:\ \mathbb{R}^2\to\mathbb{R},\quad g(y,d_y)=\log\frac{\phi(B_k,F_{m}(t_{k+1}))+\exp d_y+\exp(-d_y)}{\phi(B_k,F_{m}(t_{k+1}))+\exp(y+d_y)+\exp(-y-d_y)}.
\end{equation*}
Observe that for all $(y,d_y)\in\mathbb{R}^2$, $g(y,d_y)+g(y,-y-d_y)=0$. For all $y>0$, 
\begin{equation*}
\begin{split}
&\exp d_y+\exp(-d_y)>\exp(y+d_y)+\exp(-y-d_y)\\
\iff &\exp(2d_y)[1-\exp y]>\exp (-y)-1\\
\iff &\exp(2d_y)<\exp (-y)\\
\iff &d_y<-\frac{1}{2}y.
\end{split}
\end{equation*}
and similarly,
\begin{equation*}
\begin{split}
&\exp d_y+\exp(-d_y)<\exp(y+d_y)+\exp(-y-d_y)\\
\iff &d_y>-\frac{1}{2}y.
\end{split}
\end{equation*}
Therefore, for all $y>0$, 
\begin{equation*}
\begin{split}
g(y,d_y)>0\iff &d_y<-\frac{1}{2}y\\
g(y,d_y)=0\iff &d_y=-\frac{1}{2}y\\
g(y,d_y)<0\iff &d_y>-\frac{1}{2}y.
\end{split}
\end{equation*}
In addition, $\log\frac{\phi(A_k,F_{m}(t_k))+\exp y+\exp(-y)}{\phi(A_k,F_{m}(t_k))+2}>0$ for all $y>0$. This is because 
\begin{equation*}
\exp y+\exp(-y)>2\iff\frac{\exp y}{\exp y}\cdot\frac{\exp y-1}{1-\exp(-y)}>0\iff\exp y>0.\\
\end{equation*}

Now it is possible to do more with \eqref{R1}. Observe that 
\begin{equation*}
\int_{R_+}h_k^{m}d\mu_k^{m}=\sum_{R\in\{R_1\setminus R_2,R_2,R_2'\}}\int_{R}h_k^{m}d\mu_k^{m}.
\end{equation*}
Observing that $h_k^m$ is non-negative on $R_2'$, Lemma \ref{l:integral} combined with $(ii)$ and Lemma \ref{iii} gives
\begin{equation*}
\int_{R_2'}h_k^{m}(y,d_y)d\mu_k^{m}(y,d_y)\geq\int_{R_2}h_k^{m}(y,-y-d_y)d\mu_k^{m}(y,d_y).
\end{equation*}
Therefore 
\begin{equation*}
\begin{split}
\int_{R_+}h_k^{m}d\mu_k^{m}\geq&\int_{R_2}[h_k^{m}(y,d_y)+h_k^{m}(y,-y-d_y)]d\mu_k^{m}(y,d_y)\\
&+\int_{R_1\setminus R_2}h_k^{m}(y,d_y)d\mu_k^{m}(y,d_y).
\end{split}
\end{equation*}
The results of the previous paragraph make it clear that $h_k^{m}>0$ on $R_1\setminus R_2$, and
\begin{equation*}
h_k^{m}(y,d_y)+h_k^{m}(y,-y-d_y)>g(y,d_y)+g(y,-y-d_y)=0.
\end{equation*}
on $R_2$. Since $\int_{R_1}d\mu_k^{m}(y,d_y)>0$ by choice of $m$, it follows that
\begin{equation*}
\int_{R_+}h_k^{m}d\mu_k^{m}>0.
\end{equation*}

Once again, fix $m\in\{m_1,...,m_2\}$, but this time do not require $\mu_k^m(R_1)>0$. From here, the goal is to show $\mathbb{E}\Delta_k\log\frac{V_{\pi^{m}}}{V_{\pi^{m-1}}}\geq0$. By \eqref{e:dklv}, it suffices to show $\int_{\mathbb{R}^2}f_k^md\mu_k^m\geq0$. Note that the following inequality, discovered earlier in this proof, still holds.
\begin{equation}
\begin{split}
\int_{\mathbb{R}^2}f_k^md\mu_k^m\geq&\int_{R_2}[h_k^{m}(y,d_y)+h_k^{m}(y,-y-d_y)]d\mu_k^{m}(y,d_y)\\
&+\int_{R_1\setminus R_2}h_k^{m}(y,d_y)d\mu_k^{m}(y,d_y).
\end{split}
\label{e:fhhh}
\end{equation}
Moreover, the integrands on the right side of \eqref{e:fhhh} are non-negative, so their integrals are also non-negative.
\end{proof}

\begin{lemma}
Denote with $(i)_{\mathcal{B}}$ and $(ii)_{\mathcal{B}}$ the conditions $(i)$ and $(ii)$ of Theorem \ref{T1} after replacing $\mathcal{R}$ with $\mathcal{B}$. Then $(i)$ implies $(i)_{\mathcal{B}}$ and $(ii)$ implies $(ii)_{\mathcal{B}}$.
\label{iii}
\end{lemma}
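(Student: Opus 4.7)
The plan is to handle the two implications separately, because (i) is an equality and (ii) is an inequality. For (i), the result follows from a standard uniqueness-of-extension argument: the map $B \mapsto -B$ is a Borel automorphism of $\mathbb{R}^2$, so $\tilde\mu(B) := \mu^m_k(-B)$ defines a finite Borel measure on $\mathcal{B}$. Hypothesis (i) says $\mu^m_k$ and $\tilde\mu$ agree on the $\pi$-system $\mathcal{R}$, and since $\sigma(\mathcal{R}) = \mathcal{B}$, Dynkin's $\pi$-$\lambda$ theorem (equivalently, the uniqueness part of Carath\'eodory's extension theorem) forces $\mu^m_k(B) = \mu^m_k(-B)$ for every $B \in \mathcal{B}$.

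For (ii), more care is needed because the naive candidate $\lambda$-system $\{B : \mu^m_k(B) \leq \mu^m_k(B')\}$ is not closed under proper set differences (subtracting two inequalities can flip them). Introduce the continuous involution $h(x,y) = (x,-x-y)$, so $h(R) = R'$, and set $\tilde\mu(B) := \mu^m_k(h(B))$, a finite Borel measure on $\mathbb{R}^2$. The goal reduces to showing $\mu^m_k(B) \leq \tilde\mu(B)$ for every $B \in \mathcal{B}_{R_2}$, given this inequality on $\mathcal{R}_{R_2}$.

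The key observation is that $\mathcal{R}_{R_2}$ is a semi-algebra on $R_2$: it contains $\emptyset$ and $R_2 = \mathbb{R}^2 \cap R_2$, is closed under finite intersection, and for any $R \cap R_2 \in \mathcal{R}_{R_2}$ the complement $R_2 \setminus R$ splits as a finite disjoint union of elements of $\mathcal{R}_{R_2}$ (because $\mathbb{R}^2 \setminus R$ is a finite disjoint union of rectangles). Consequently, the algebra $\mathcal{A}_0$ it generates consists of finite disjoint unions $A = \bigsqcup_i R_i$ with each $R_i \in \mathcal{R}_{R_2}$. Since $h$ is bijective, $h(A) = \bigsqcup_i R_i'$ remains disjoint, and additivity of $\mu^m_k$ combined with hypothesis (ii) applied rectangle by rectangle yields
\begin{equation*}
\mu^m_k(A) \;=\; \sum_i \mu^m_k(R_i) \;\leq\; \sum_i \mu^m_k(R_i') \;=\; \tilde\mu(A).
\end{equation*}

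To close, extend from $\mathcal{A}_0$ to $\mathcal{B}_{R_2}$ via outer-measure regularity. Both $\mu^m_k$ and $\tilde\mu$ are the unique Carath\'eodory extensions of their restrictions to $\mathcal{A}_0$, so for any $B \in \mathcal{B}_{R_2}$ and $\epsilon > 0$ there is a countable cover $\{A_n\} \subset \mathcal{A}_0$ of $B$ with $\sum_n \tilde\mu(A_n) \leq \tilde\mu(B) + \epsilon$. Countable subadditivity of $\mu^m_k$ together with the inequality already established on $\mathcal{A}_0$ gives
\begin{equation*}
\mu^m_k(B) \;\leq\; \sum_n \mu^m_k(A_n) \;\leq\; \sum_n \tilde\mu(A_n) \;\leq\; \tilde\mu(B) + \epsilon,
\end{equation*}
and sending $\epsilon \to 0$ finishes the argument. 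The main obstacle is precisely the failure of Dynkin's lemma for inequalities; the semi-algebra plus outer-measure detour is what bypasses it.
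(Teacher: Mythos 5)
Your proof is correct, and for condition $(ii)$ it is in substance the same covering argument the paper uses: approximate $B\in\mathcal{B}_{R_2}$ from outside by unions of sets from $\mathcal{R}_{R_2}$, apply the rectangle-level inequality termwise, and recover $\mu_k^m(B')$ from the primed cover. The differences are in execution rather than in the key idea. For $(i)$ you invoke Dynkin's $\pi$--$\lambda$ theorem where the paper runs the same infimum-over-covers computation it uses for $(ii)$; both are standard uniqueness-of-extension arguments and either works, since $\mathcal{R}$ is a $\pi$-system generating $\mathcal{B}$ and the measures are finite. For $(ii)$ your route through the semi-algebra $\mathcal{R}_{R_2}$, the generated algebra of finite disjoint unions, and then countable covers via the Carath\'eodory outer measure is more careful than the paper's version on two points: you correctly note that the naive $\lambda$-system argument fails for inequalities (the paper does not address this, but its covering argument sidesteps it anyway), and you use \emph{countable} covers, whereas the paper writes the infimum over finite covers $\sum_{j=1}^{J}$, which as stated gives the outer content rather than the measure of a general Borel set and really should be a countable infimum. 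So your write-up buys a fully rigorous extension step at the cost of some extra machinery; the paper's is shorter but leans on the covering characterization being quoted slightly imprecisely.
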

\begin{proof}
Fix $m\in\{1,...,n\}$ and $k\geq0$ such that $t_{k+1}\in T$. Suppose condition $(i)$ of Theorem \ref{T1} holds. Let $B\in\mathcal{B}$. Since $\mu_k^i$ is the unique extension to $\mathcal{B}$ of the measure $\mu_k^m\vert _{\mathcal{R}}$, it follows that
\begin{equation*}
\begin{split}
\mu_k^m(B)&=\inf\Bigg\{\sum_{j=1}^J\mu_k^m(R_j):\ R_j\in\mathcal{R}\ \forall j,\ B\subset\cup_{j=1}^JR_j\Bigg\}\\
&=\inf\Bigg\{\sum_{j=1}^J\mu_k^m(-R_j):\ R_j\in\mathcal{R}\ \forall j,\ B\subset\cup_{j=1}^JR_j\Bigg\}\\
&=\inf\Bigg\{\sum_{j=1}^J\mu_k^m(R_j):\ R_j\in\mathcal{R}\ \forall j,\ -B\subset\cup_{j=1}^JR_j\Bigg\}\\
&=\mu_k^m(-B).
\end{split}
\end{equation*}
Therefore $(i)_{\mathcal{B}}$ is satisfied. 

Suppose condition $(ii)$ of Theorem \ref{T1} holds. Let $B\in\mathcal{B}_{R_2}$. Given $A\in\mathcal{B}$, observe that $\mu_k^i\vert _{\mathcal{B}_{A}}$ is the unique extension to $\mathcal{B}_{A}$ of the measure $\mu_k^m\vert _{\mathcal{R}_{A}}$. It follows that
\begin{equation*}
\begin{split}
\mu_k^m(B)&=\inf\Bigg\{\sum_{j=1}^J\mu_k^m(R_j):\ R_j\in\mathcal{R}_{R_2}\ \forall j,\ B\subset\cup_{j=1}^JR_j\Bigg\}\\
&\leq\inf\Bigg\{\sum_{j=1}^J\mu_k^m(R_j'):\ R_j\in\mathcal{R}_{R_2}\ \forall j,\ B\subset\cup_{j=1}^JR_j\Bigg\}\\
&=\inf\Bigg\{\sum_{j=1}^J\mu_k^m(R_j):\ R_j\in\mathcal{R}_{R_2'}\ \forall j,\ B'\subset\cup_{j=1}^JR_j\Bigg\}\\
&=\mu_k^m(B').
\end{split}
\end{equation*}
Therefore $(ii)_{\mathcal{B}}$ is satisfied. 
\end{proof}

\begin{lemma}
Fix $m\in\{1,...,n\}$. Consider the following conditions.
\begin{enumerate}[(a)]
\item $\nu^{m}_k(y,R)=\nu^{m}_k(-y,-R)$ for all $R\in\mathcal{R}$, $u_k^m$-almost everywhere.
\item $\nu^{m}_k(y,R)\leq \nu^{m}_k(y,R')$ for all $R\in\mathcal{R}_{R_2}$, $u_k^m$-almost everywhere.
\end{enumerate}
Suppose $\mathbb{P}(Y_m(t_0)\in I)=\mathbb{P}(Y_m(t_0)\in\{-y:\ y\in I\})$ for all intervals $I\subset\mathbb{R}$. Recall conditions (i) and (ii) of Theorem \ref{T1}. If (a) holds for all $k\leq K$, then so does (i). If (b) holds for all $k\leq K$, then so does (ii).
\label{l:conditional}
\end{lemma}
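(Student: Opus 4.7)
The strategy is to dispatch (ii) in one line by integrating (b), and to prove (i) by induction on $k$ in tandem with the symmetry of the marginals $u_k^m$. For (ii), fix $k\leq K$ and $R\in\mathcal{R}_{R_2}$, and observe that $R'$ is the image of $R$ under the continuous involution $(y,d_y)\mapsto(y,-y-d_y)$, so $R'\in\mathcal{B}$. Condition (b) gives $\nu_k^m(y,R)\leq\nu_k^m(y,R')$ for $u_k^m$-almost every $y$; integrating against $u_k^m(dy)$ via $\mu_k^m(B)=\int_{\mathbb{R}}\nu_k^m(y,B)\,du_k^m(y)$ yields (ii) immediately.

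For (i), I would prove by induction on $k\in\{0,\dots,K\}$ the joint claim: (i) holds at step $k$, and $u_k^m(B)=u_k^m(-B)$ for every $B\in\mathcal{B}_1$. The base case starts from the standing hypothesis, which asserts symmetry of $u_0^m$ on the $\pi$-system of intervals; since intervals generate $\mathcal{B}_1$, uniqueness of measure extension promotes this symmetry to all of $\mathcal{B}_1$. Then for any $R\in\mathcal{R}$,
\begin{equation*}
\mu_0^m(-R)=\int\nu_0^m(y,-R)\,du_0^m(y)=\int\nu_0^m(-y,R)\,du_0^m(y)=\int\nu_0^m(y,R)\,du_0^m(y)=\mu_0^m(R),
\end{equation*}
using (a) at $k=0$ in the second equality and the change of variable $y\mapsto -y$ (licensed by symmetry of $u_0^m$) in the third.

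For the inductive step, assume the joint claim at step $k<K$. Symmetry of $u_{k+1}^m$ is obtained by writing $Y_m(t_{k+1})=T(Y_m(t_k),\Delta_kY_m)$ with $T(y,d_y)=y+d_y$, which is continuous, so $u_{k+1}^m(I)=\mu_k^m(T^{-1}(I))$ for $I\in\mathcal{B}_1$. Since $T^{-1}(-I)=-T^{-1}(I)$, and since Lemma \ref{iii} extends (i) at step $k$ from $\mathcal{R}$ to all of $\mathcal{B}$, it follows that $u_{k+1}^m(-I)=\mu_k^m(-T^{-1}(I))=\mu_k^m(T^{-1}(I))=u_{k+1}^m(I)$. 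With $u_{k+1}^m$ now symmetric on $\mathcal{B}_1$, the three-equality chain of the base case (subscripts $0$ replaced by $k+1$, using (a) at $k+1$) delivers (i) at step $k+1$. The main obstacle is precisely this bootstrapping loop: the change-of-variable step in the derivation of (i) at step $k$ requires symmetry of $u_k^m$, while symmetry of $u_{k+1}^m$ requires (i) at step $k$ on all Borel preimages $T^{-1}(I)$, not merely on rectangles. Lemma \ref{iii} supplies the needed Borel extension and is what lets the induction close.
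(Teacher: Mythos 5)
Your proposal is correct and follows essentially the same route as the paper: condition (ii) is obtained by integrating (b) against $u_k^m$ via the disintegration $\mu_k^m(B)=\int\nu_k^m(y,B)\,du_k^m(y)$, and condition (i) is obtained by an induction that carries the symmetry of the marginal $u_k^m$ alongside (i), invoking Lemma \ref{iii} to extend (i) from rectangles to Borel sets so that the non-rectangular preimage $\{(y,d_y):y+d_y\in I\}$ can be handled in the step from $u_k^m$ to $u_{k+1}^m$. The only cosmetic difference is that you justify the change of variable $y\mapsto -y$ directly from the symmetry of $u_k^m$ where the paper cites Lemma \ref{l:integral}; the logical content is identical.
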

\begin{proof}
First observe that if $u_k^m(I)=u_k^m(\{-z:\ z\in I\})$ for all intervals $I\subset\mathbb{R}$, uniqueness of the extension to a measure on the Borel sigma algebra implies the equality also holds when $I$ is a Borel subset of $\mathbb{R}$. 

Fix the interval $I\subset\mathbb{R}$, and suppose $(a)$ holds for all $k\leq K$. An inductive argument shows that (i) holds and $\mathbb{P}(Y_m(t_k)\in I)=\mathbb{P}(Y_m(t_k)\in\{-y:\ y\in I\})$ for all $k\leq K$. 

By assumption, $u_0^m(I)=u_0^m(\{-z:\ z\in I\})$. Now suppose $u_k^m(I)=u_k^m(\{-z:\ z\in I\})$. Let $I_1\times I_2=R\in\mathcal{R}$. Then by Lemma \ref{l:integral}, $(a)$ and conditional probability theory, 
\begin{equation*}
\begin{split}
\mu^m_k(R)&=\int_{I_1}\nu^m_k(y,R)du^m_k(y)\\
&=\int_{I_1}\nu^m_k(-y,-R)du^m_k(y)\\
&=\int_{\{-y:\ y\in I_1\}}\nu^{m}_k(y,-R)du^m_k(y)\\
&=\mu^{m}_k(-R).
\end{split}
\end{equation*}
Therefore $(i)$ holds, and by Lemma \ref{iii}, $(i)_{\mathcal{B}}$ also holds. Moreover, if $k+1\leq K$, then
\begin{equation*}
\begin{split}
u_{k+1}^m(I)&=\mu^m_k(\{(y,d_y)\in\mathbb{R}^2:\ y+d_y\in I\})\\
&=\mu^m_k(-\{(y,d_y)\in\mathbb{R}^2:\ y+d_y\in I\})\\
&=u_{k+1}^m\{-y:\ y\in I\}).
\end{split}
\end{equation*}
The induction is complete.

Now fiix $k\leq K$ and suppose $(b)$ holds. Let $R\in\mathcal{R}_{R_2}$ and set $I=\{y:\ (y,d_y)\in R\}$. Then by conditional probability theory, 
\begin{equation*}
\begin{split}
\mu^m_k(R)&=\int_{I}\nu^m_k(y,R)du^m_k(y)\\
&\leq\int_{I}\nu^{m}_k(y,R')du^m_k(y)\\
&=\mu^{m}_k(R').
\end{split}
\end{equation*}
Therefore $(ii)$ holds.
\end{proof}

\subsection*{Theorem \ref{T3}}
\begin{proof}
By Theorem \ref{T1} and Lemma \ref{l:conditional}, it suffices to show $\sum_{m=m_1}^{m_2}\mu^{m}_K(R_1)>0$. Using the assumption, fix $m\in\{m_1,...,m_2\}$ such that 
\begin{equation*}
\begin{split}
&\nu_{K-1}^m(y,\{(z,-z):\ z\in\mathbb{R}\})<1\ \ u_{K-1}^m\text{-almost everywhere},\\
&\nu_K^m(y,R_1)+\mathcal{X}_{(-\infty,0]}(y)>0\ \ u_K^m\text{-almost everywhere}.
\end{split}
\end{equation*} 
By conditional probability theory,
\begin{equation*}
u_K^m(\{0\})=\int_\mathbb{R}\nu_{K-1}^m(y,\{(z,-z):\ z\in\mathbb{R}\})du_{K-1}^m(y)<1.
\end{equation*}
By Lemma \ref{l:conditional}, $(i)$ from Theorem \ref{T1} is satisfied for $K$. Therefore 
\begin{equation*}
\begin{split}
0<u_K^m(\mathbb{R}\setminus\{0\})&=\mu_K^m((-\infty,0)\times\mathbb{R})+\mu_K^m((0,\infty)\times\mathbb{R})\\
&=2\mu_K^m((0,\infty)\times\mathbb{R})=2u_K^m((0,\infty)).
\end{split}
\end{equation*}
Since $u_K^m((0,\infty))>0$, another application of conditional probability theory gives
\begin{equation*}
\mu_K^m(R_1)=\int_{(0,\infty)}\nu_K^m(y,R_1)du_{K}^m(y)>0.
\end{equation*}
\end{proof}

\subsection*{Corollary \ref{c:ou}}
\begin{proof}
It suffices to show all conditions of Theorem \ref{T3} are satisfied. $(c)$ is satisfied because for all $m\in\{m_1,...,m_2\}$, $Y_m(0)$ is a non-trivial random variable with symmetric distribution about $0$. Now to show that $(a)$ and $(b)$ hold for all $k\leq K$.

Fix $m\in\{m_1,...,m_2\}$, $k\leq K$ and $y\in\mathbb{R}$. For $R\in\mathcal{R}\cup\mathcal{R}_{R_2}$, the Ornstein-Uhlenbeck process $Y_m(t)$ has conditional probability measure $\nu_k^m(y,R)=\int_{R_y}p_y(x)dx$, where $R_y=\{x:\ (y,x)\in R\}$ and $p_y$ is the density for a Normal random variable with mean 
\begin{equation*}
y\cdot(\exp(-\theta_m(t_{k+1}-t_k))-1),
\end{equation*} 
and variance that does not depend on $y$. Since the mean is an odd function in $y$, and the variance does not depend on $y$, $p_y(x)=p_{-y}(-x)$ for all $x,y\in\mathbb{R}$. It follows that $(a)$ holds. Next observe that $t_{k+1}-t_k\geq\frac{\ln 2}{\theta_m}$ implies the mean associated with $p_y$ is no greater than $-\frac{1}{2}y$. Therefore $p_y(x)\leq p_y(-y-x)$ for all $y>0$ and $x>-\frac{1}{2}y$. It follows that $(b)$ holds.

The remaining two conditions of Theorem \ref{T3} hold because $p_y$ is the density for a continuous random variable on $\mathbb{R}$. In particular, 
\begin{equation*}
\begin{split}
&\nu_{K-1}^m(y,\{(z,-z):\ z\in\mathbb{R}\})=\int_{\{-y\}}p_y(x)dx=0,\\
&\nu_{K}^m(y,R_1\})=\int_{(R_1)_y}p_y(x)dx>0.
\end{split}
\end{equation*}
\end{proof}

\subsection*{Corollary \ref{corar}}
\begin{proof}
It suffices to show all conditions of Theorem \ref{T3} are satisfied. $(c)$ is satisfied because for all $m\in\{m_1,...,m_2\}$, $Y_m(0)$ is a non-trivial random variable with symmetric distribution about $0$. Now to show that $(a)$ and $(b)$ hold for all $k\geq 0$.

Fix $m\in\{m_1,...,m_2\}$ and $k\geq 0$. Observe that $\Delta_kY_m=(\theta_m-1)Y_m(k)+Z_m(k)$. Therefore 
\begin{equation*}
(\Delta_kY_m\vert \ Y_m(k)=y)=(\theta_m-1)y+Z_m(k+1).
\end{equation*}
Since $Z_m(k+1)$ is symmetric with mean 0,
\begin{equation*}
(\Delta_kY_m\vert \ Y_m(k)=y)\stackrel{\text{d}}{=}-(\Delta_kY_m\vert \ Y_m(k)=-y).
\end{equation*}
So $(a)$ holds. Next observe that $\theta_m\leq\frac{1}{2}$ implies the mean of $(\Delta_kY_m\vert \ Y_m(k)=y)$ is no greater than $-\frac{1}{2}y$. Moreover, $(\Delta_kY_m\vert \ Y_m(k)=y)$ has symmetric distribution about its mean. So $(b)$ is satisfied.

Now fix $k>0$. From here, the goal is to show the remaining two conditions of Theorem \ref{T3} are satisfied. Since $\mathbb{P}(Z_m(k)>a)>0$ for all $a\in\mathbb{R}$, the following holds for all $y\in\mathbb{R}$.
\begin{equation*}
\begin{split}
\nu_{k-1}^m(y,\{(z,-z):\ z\in\mathbb{R}\})&=\mathbb{P}((\theta_m-1)y+Z_m(k)=-y)\\
&\leq1-\mathbb{P}(Z_m(k)>-\theta_my)\\
&<1.
\end{split}
\end{equation*}
Similarly, for all $y\in(0,\infty)$,
\begin{equation*}
\nu_{k}^m(y,R_1)=\mathbb{P}\big(Z_m(k+1)\geq-\frac{1}{2}y-(\theta_m-1)y\big)>0.
\end{equation*}
\end{proof}

\subsection*{Corollary \ref{corMarkov}}
\begin{proof}
It suffices to show all conditions of Theorem \ref{T3} are satisfied. $(c)$ is satisfied because for all $m\in\{m_1,...,m_2\}$, $Y_m(0)$ is a non-trivial random variable with symmetric distribution about $0$. Now to show that $(a)$ and $(b)$ hold for all $k\geq 0$.

Fix $m\in\{m_1,...,m_2\}$ and $k\geq 0$. Assume that $k_1$ and $k_2$ denote integers. Given $B\in\mathcal{B}$, the Markov chain $Y_m$ has conditional probability measure 
\begin{equation*}
\nu_k^m(k_1s,B)=\sum_{k_2s\in B_{k_1s}}P(k_1s,(k_1+k_2)s),
\end{equation*} 
where $B_y=\{x:\ (y,x)\in B\}$. Let $R\in\mathcal{R}$. Using assumption $(i)$,
\begin{equation*}
\begin{split}
\nu_k^m(k_1s,R)&=\sum_{k_2s\in R_{k_1s}}P(k_1s,k_2s)\\
&=\sum_{k_2s\in R_{k_1s}}P(-k_1s,-k_2s)\\
&=\nu_k^m(-k_1s,-R).
\end{split}
\end{equation*}
Therefore $(a)$ is satisfied. Now let $R\in\mathcal{R}_{R_2}$. Using assumption $(ii)$, 
\begin{equation*}
\begin{split}
\nu_k^m(k_1s,R)&=\sum_{k_2s\in R_{k_1s}}P(k_1s,(k_1+k_2)s)\\
&\leq\sum_{k_2s\in R_{k_1s}}P_m(k_1s,-k_2s)\\
&=\nu_k^m(k_1s,R').
\end{split}
\end{equation*}
Therefore $(b)$ is satisfied.

Now fix $k>0$. From here, the goal is to show the remaining two conditions of Theorem \ref{T3} are satisfied. Using assumption $(iv)$ and then $(v)$,
\begin{equation*}
\begin{split}
&\nu_{K-1}^m(y,\{(z,-z):\ z\in\mathbb{R}\})=P_m(y,0)<1\ \text{for all } y\in S,\\
&\nu_K^m(y,R_1)+\mathcal{X}_{(-\infty,0]}(y)\\
&\quad=\sum_{d_y\geq -\frac{1}{2}y,\ d_y\in S}P_m(y,y+d_y)\mathcal{X}_{(0,\infty)}(y)+\mathcal{X}_{(-\infty,0]}(y)>0\ \text{for all } y\in S.
\end{split}
\end{equation*} 
\end{proof}

\begin{lemma}
Let $(\mathbb{R}^2,\mathcal{A},\mu)$ be a measure space, $f:\mathbb{R}^2\to[0,\infty)$ be an $\mathcal{A}$-measurable function, and $R\in\mathcal{A}$ such that $R\cap R'=\emptyset$. Suppose $r:\mathbb{R}^2\to[0,1)$ is an $\mathcal{A}$-measurable function such that for all $A\in\mathcal{A}_{R'}$, 
\begin{equation*}
\mu(A')\geq\int_{A}\frac{r}{1-r}d\mu,
\end{equation*}
Then
\begin{equation*}
\int_{R\cup R'}fd\mu\geq\int_R\frac{1}{1-r(x,y)}[(1-r(x,y))f(x,y)+r(x,y)f(x,-x-y)]d\mu(x,y).
\end{equation*}
\label{lself}
\end{lemma}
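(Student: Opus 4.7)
The plan is to simplify the claimed inequality and then prove it by the same characteristic-function / simple-function / monotone-convergence bootstrap used in the proof of Lemma \ref{l:integral}.

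First I would reduce. The integrand on the right, $\frac{1}{1-r(x,y)}[(1-r(x,y))f(x,y)+r(x,y)f(x,-x-y)]$, simplifies to $f(x,y)+\frac{r(x,y)}{1-r(x,y)}f(x,-x-y)$. Because $R\cap R'=\emptyset$, $\int_{R\cup R'}f\,d\mu=\int_R f\,d\mu+\int_{R'}f\,d\mu$, and cancelling the common $\int_R f\,d\mu$ reduces the conclusion to
\begin{equation*}
\int_{R'}f\,d\mu\;\geq\;\int_R\frac{r(x,y)}{1-r(x,y)}\,f(x,-x-y)\,d\mu(x,y).
\end{equation*}
Write $h(x,y):=(x,-x-y)$; then $h$ is a continuous, Borel-measurable involution ($h\circ h=\mathrm{id}$) that preserves measurability and distributes over intersections, so in particular $\chi_B\circ h=\chi_{h(B)}$ for any $B\in\mathcal{A}$, and the integrand on the right equals $\tfrac{r}{1-r}(f\circ h)$.

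Next I would bootstrap in $f$. For the base case $f=\chi_B$ with $B\in\mathcal{A}$, the left-hand side equals $\mu(R'\cap B)$, and the right-hand side equals $\int_{R\cap h(B)}\tfrac{r}{1-r}d\mu$. Setting $A:=R\cap h(B)$, the involution gives $A'=h(A)=h(R)\cap h(h(B))=R'\cap B$, so the required characteristic-function inequality $\mu(R'\cap B)\geq\int_{R\cap h(B)}\tfrac{r}{1-r}d\mu$ is exactly $\mu(A')\geq\int_A\tfrac{r}{1-r}d\mu$, which is the hypothesis applied to $A$ (after the standard reindexing between $A$ and $A'$ induced by $h$). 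From there, linearity of the integral extends the inequality to non-negative simple functions $f=\sum_i c_i\chi_{B_i}$, and the monotone convergence theorem applied to an approximating sequence $f_n\uparrow f$ of non-negative simple functions extends it to all non-negative $\mathcal{A}$-measurable $f$.

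The main obstacle is the set-theoretic bookkeeping around the involution $h$ in the base case: one has to verify $h(R\cap h(B))=R'\cap B$ using $h\circ h=\mathrm{id}$ and intersection-distributivity of $h$, and then correctly match the pair $(A,A')=(R\cap h(B),\,R'\cap B)$ against the hypothesis with the right orientation. Once this pairing is in hand, linearity and MCT finish the extension mechanically, exactly as in Lemma \ref{l:integral}.
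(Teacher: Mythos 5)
Your proof is correct and takes essentially the same route as the paper: the paper splits $\int_{R\cup R'}f\,d\mu$ by disjointness of $R$ and $R'$ and then cites Lemma \ref{l:integral} for the bound $\int_{R'}f\,d\mu\geq\int_R f(x,-x-y)\frac{r(x,y)}{1-r(x,y)}\,d\mu(x,y)$, whereas you inline the characteristic-function / simple-function / monotone-convergence bootstrap that proves that cited lemma. One remark: your base case invokes the hypothesis at $A=R\cap h(B)$, which lies in $\mathcal{A}_{R}$, while the lemma as printed states the hypothesis for $A\in\mathcal{A}_{R'}$; the printed orientation appears to be a typo (as stated it does not yield the conclusion, and the inequality actually supplied when the lemma is used in the proof of Theorem \ref{T2} is for subsets of $R_2$, i.e.\ of $R$), so your reading is the intended one and matches what the paper's own proof implicitly uses.
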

\begin{proof}
Applying disjointness of $R$ and $R'$ and then Lemma \ref{l:integral},
\begin{equation*}
\begin{split}
\int_{R\cup R'}fd\mu&=\int_{R}fd\mu+\int_{R'}fd\mu\\
&\geq\int_{R}fd\mu+\int_Rf(x,-x-y)\frac{r(x,y)}{1-r(x,y)}d\mu(x,y)\\
&=\int_{R}f(x,y)+f(x,-x-y)\frac{r(x,y)}{1-r(x,y)}d\mu(x,y).
\end{split}
\end{equation*}
\end{proof}

\subsection*{Theorem \ref{T2}}
\begin{proof}
By assumption of the Theorem, fix $m\in\{m_1,...,m_2\}$ and $k<K$ such that 
\begin{equation*}
\mu^{m}_k(\{(y,d_y)\in\mathbb{R}^2:\ y>0,\  d_y\geq -\frac{1}{2}y\})>0.
\end{equation*}
Using the same logic and notation as in the proof of Theorem \ref{T1}, it suffices to show 
\begin{equation*}
\sum_{R\in\{R_1\setminus R_2,R_2,R_2'\}}\int_{R}h_k^{m}d\mu_k^{m}>0.
\end{equation*}
By Lemma \ref{lself} combined with $(ii)$ and Lemma \ref{l:T2ii},
\begin{equation*}
\begin{split}
&\sum_{R\in\{R_2,R_2'\}}\int_{R}h_k^{m}d\mu_k^{m}\\
&\quad\geq\int_{R_2}\frac{1}{1-r_k^m(y,d_y)}[(1-r_k^m(y,d_y))h_k^{m}(y,d_y)+r_k^m(y,d_y)h_k^{m}(y,-y-d_y)]d\mu_k^{m}(y,d_y).
\end{split}
\end{equation*}
Since $\int_{R_1}d\mu_k^{m}(y,d_y)>0$, $h_k^m(y,d_y)>0$ on $R_1\setminus R_2$ and $\frac{1}{1-r_k^m(y,d_y)}\geq 1$ on $R_2$, it now suffices to show 
\begin{equation}
(1-r_k^m(y,d_y))h_k^m(y,d_y)+r_k^m(y,d_y)h_k^m(y,-y-d_y)>0,\quad\forall (y,d_y)\in R_2.
\label{mrh}
\end{equation}
Fix $(y,d_y)\in R_2$. From the proof of Theorem \ref{T1}, reuse the notation
\begin{equation*}
h_k^m(y,d_y)=\log\frac{\phi(A_k,F_m(t_k))+\exp y+\exp(-y)}{\phi(A_k,F_m(t_k))+2}+g(y,d_y).
\end{equation*}
By Theorem \ref{T1}, \eqref{mrh} is satisfied if $r_k^m(y,d_y)\geq .5$. Now suppose $r_k^m(y,d_y)<.5$. Using logic from the proof of Theorem \ref{T1} and the fact that $1-r_k^m(y,d_y)>r_k^m(y,d_y)$,
\begin{equation*}
\begin{split}
&(1-r_k^m(y,d_y))h_k^m(y,d_y)+r_k^m(y,d_y)h_k^m(y,-y-d_y)\\
&\quad =\log\frac{\phi(A_k,F_m(t_k))+\exp y+\exp(-y)}{\phi(A_k,F_m(t_k))+2}+[1-2r_k^m(y,d_y)]g(y,d_y).
\end{split}
\end{equation*}

Conditions $(iii)$ and $(iv)$ imply 
\begin{equation*}
\begin{split}
&F_{m}(t_k)\exp(-\delta_1)\leq F_{m}(t_{k+1})\leq F_{m}(t_k)\exp\delta_1,\\
&A_k\exp(-\delta_1-\delta_2)\leq B_k\leq A_k\exp(\delta_1+\delta_2).
\end{split}
\end{equation*}
It follows that
\begin{equation*}
\phi(B_k,F_{m}(t_{k+1}))\exp(-2\delta_1-\delta_2)\leq \phi(A_k,F_m(t_k))\leq \phi(B_k,F_{m}(t_{k+1}))\exp(2\delta_1+\delta_2).
\end{equation*}
Set $\delta=2\delta_1+\delta_2$.

Differentiating $\log\frac{x+\exp y+\exp(-y)}{x+2}$ with respect to $x$ shows it to be decreasing as $x$ increases. Therefore
\begin{equation*}
\log\frac{\phi(A_k,F_m(t_k))+\exp y+\exp(-y)}{\phi(A_k,F_m(t_k))+2}\geq\log\frac{\phi(B_k,F_{m}(t_{k+1}))\exp(\delta)+\exp y+\exp(-y)}{\phi(B_k,F_{m}(t_{k+1}))\exp(\delta)+2}.
\end{equation*}
So now it suffices to show
\begin{equation}
\begin{split}
&\log\frac{x\exp(\delta)+\exp y+\exp(-y)}{x\exp(\delta)+2}\\
&+[1-2r_k^m(y,d_y)]\log\frac{x+\exp d_y+\exp(-d_y)}{x+\exp(y+d_y)+\exp(-y-d_y)}>0,
\end{split}
\label{dk}
\end{equation}
where $x=\phi(B_k,F_{m}(t_{k+1}))$. Some manipulation reveals that \eqref{dk} is equivalent to 
\begin{equation}
\frac{1}{2}\left[1-\frac{\log\frac{x\exp(\delta)+\exp y+\exp(-y)}{x\exp(\delta)+2}}{\log\frac{x+\exp(y+d_y)+\exp(-y-d_y)}{x+\exp d_y+\exp(-d_y)}}\right]<r_k^m(y,d_y).
\label{dk2}
\end{equation}

Taking the derivative of the left side of \eqref{dk} with respect to $x$ gives
\begin{equation*}
\begin{split}
&\frac{1}{\exp\delta}\left[\frac{2-\exp y-\exp(-y)}{(x+\frac{\exp y+\exp(-y)}{\exp\delta})(x+\frac{2}{\exp\delta})}\right]\\
&+(1-2r_k^m(y,d_y))\left[\frac{\exp(y+d_y)+\exp(-y-d_y)-\exp d_y-\exp(-d_y)}{(x+\exp(y+d_y)+\exp(-y-d_y))(x+\exp d_y+\exp(-d_y))}\right].
\end{split}
\end{equation*}
By $(iv)$ and the definition of $\delta$, 
\begin{equation*}
\begin{split}
\frac{\exp y+\exp(-y)}{\exp\delta}&\leq \exp(y+d_y)+\exp(-y-d_y),\\
\frac{2}{\exp\delta}&\leq \exp d_y+\exp(-d_y).
\end{split}
\end{equation*}
So the derivative of the left side of \eqref{dk} is non-positive provided
\begin{equation*}
\begin{split}
&\frac{2-\exp y-\exp(-y)}{\exp\delta}\\
&+(1-2r_k^m(y,d_y))[\exp(y+d_y)+\exp(-y-d_y)-\exp d_y-\exp(-d_y)]\leq0,
\end{split}
\end{equation*}
or equivalently, when
\begin{equation}
\frac{1}{2}\left[1-\exp(-\delta)\frac{\exp y+\exp(-y)-2}{\exp(y+d_y)+\exp(-y-d_y)-\exp d_y-\exp(-d_y)}\right]\leq r(y,d_y).
\label{dk4}
\end{equation}
Therefore, given \eqref{dk4}, the minimum of the left side of \eqref{dk} occurs at $x\to\infty$. So \eqref{dk2} is satisfied for all $x$ provided
\begin{equation}
\lim_{x\to\infty}\frac{1}{2}\left[1-\frac{\log\frac{x\exp(\delta)+\exp y+\exp(-y)}{x\exp(\delta)+2}}{\log\frac{x+\exp(y+d_y)+\exp(-y-d_y)}{x+\exp d_y+\exp(-d_y)}}\right]<r(y,d_y).
\label{dk3}
\end{equation}
Two applications of L'Hopital's rule reveals that the left side of \eqref{dk3} is the same as the left side of \eqref{dk4}.
\end{proof}

\begin{lemma}
Denote with $(ii)_{\mathcal{B}}$ the condition $(ii)$ of Theorem \ref{T2} after replacing $\mathcal{R}$ with $\mathcal{B}$. Then $(ii)$ implies $(ii)_{\mathcal{B}}$.
\label{l:T2ii}
\end{lemma}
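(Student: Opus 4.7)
The plan is to adapt the second half of the proof of Lemma \ref{iii} essentially verbatim, replacing the measure $\mu_k^m$ on the left-hand side of $(ii)$ by the finite Borel measure $\tilde\mu(\cdot):=\int_{\cdot}\frac{r_k^m}{1-r_k^m}\,d\mu_k^m$. The key observation is that $(ii)$ of Theorem \ref{T2} is exactly the pointwise comparison $\tilde\mu(R)\le\mu_k^m(R')$ between two measures on trace rectangles $R\in\mathcal{R}_{R_2}$, and such a comparison lifts to $\mathcal{B}_{R_2}$ once $\sigma$-finiteness of $\tilde\mu$ on $R_2$ is established.

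First I would check that $\tilde\mu$ is a well-defined Borel measure; this is automatic since $r_k^m/(1-r_k^m)\ge 0$ is measurable. Next I would show $\tilde\mu$ is finite on $R_2$: applying $(ii)$ to an increasing exhaustion $R_2=\bigcup_n R_n$ by elements $R_n\in\mathcal{R}_{R_2}$ (for instance $R_n=((1/n,n)\times(-n,n))\cap R_2$) yields $\tilde\mu(R_n)\le\mu_k^m(R_n')\le 1$, and continuity from below gives $\tilde\mu(R_2)\le\mu_k^m(R_2')\le 1$.

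With $\tilde\mu$ finite on $R_2$, uniqueness of the Caratheodory extension identifies $\tilde\mu\vert_{\mathcal{B}_{R_2}}$ as the unique extension of $\tilde\mu\vert_{\mathcal{R}_{R_2}}$, producing the outer-measure formula
\begin{equation*}
\tilde\mu(B)=\inf\Big\{\sum_{j=1}^{J}\tilde\mu(R_j):R_j\in\mathcal{R}_{R_2}\ \forall j,\ B\subset\bigcup_{j=1}^{J}R_j\Big\}
\end{equation*}
for every $B\in\mathcal{B}_{R_2}$. Applying $(ii)$ termwise bounds the right side above by $\inf\{\sum_{j}\mu_k^m(R_j'):R_j\in\mathcal{R}_{R_2},\ B\subset\cup_j R_j\}$, and reindexing the covers through the involution $R\mapsto R'$ exactly as in the proof of Lemma \ref{iii} rewrites this infimum as the analogous outer-measure formula for $\mu_k^m(B')$. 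Chaining the inequalities yields $\tilde\mu(B)\le\mu_k^m(B')$ for every $B\in\mathcal{B}_{R_2}$, which is precisely $(ii)_{\mathcal{B}}$.

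The only step not already appearing in Lemma \ref{iii} is verifying the finiteness of $\tilde\mu$ on $R_2$; this is the (mild) main obstacle, and once it is in hand the rest of the argument is a line-by-line transcription of the earlier argument.
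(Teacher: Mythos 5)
Your argument is correct and is essentially the paper's argument run in the opposite direction: both proofs express one side of the inequality as an infimum over rectangle covers, apply $(ii)$ termwise, and reindex the covers through the shear $R\mapsto R'$. The paper expands the \emph{right}-hand side, writing $\mu_k^m(B')$ as the infimum over covers of $B'$ and finishing with $\sum_j\int_{R_j}\frac{r_k^m}{1-r_k^m}\,d\mu_k^m\geq\int_{\cup_j R_j}\frac{r_k^m}{1-r_k^m}\,d\mu_k^m\geq\int_B\frac{r_k^m}{1-r_k^m}\,d\mu_k^m$; since only countable subadditivity and monotonicity of the integral are used on the $\tilde\mu$ side, the paper never needs $\tilde\mu$ to be a finite (or $\sigma$-finite) measure and so skips your exhaustion step entirely. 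Your version, which expands the \emph{left}-hand side $\tilde\mu(B)$ via the Carath\'eodory extension, does require the finiteness of $\tilde\mu$ on $R_2$ that you supply; your exhaustion $R_n=((1/n,n)\times(-n,n))\cap R_2$ together with $(ii)$ and continuity from below handles this cleanly, so the extra step is sound — it is simply avoidable. Both write-ups share the same mild notational looseness (finite-looking covers $\sum_{j=1}^J$ in the outer-measure formula, and identifying the sheared rectangles $\{R_j'\}$ with trace rectangles on $R_2'$), so neither is at a disadvantage there.
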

\begin{proof}
Fix $m\in\{1,...,n\}$ and $k\geq0$ such that $t_{k+1}\in T$. Suppose condition $(ii)$ of Theorem \ref{T1} holds. Let $B\in\mathcal{B}_{R_2}$. Given $A\in\mathcal{B}$, observe that $\mu_k^i\vert _{\mathcal{B}_{A}}$ is the unique extension to $\mathcal{B}_{A}$ of the measure $\mu_k^m\vert _{\mathcal{R}_{A}}$. It follows that
\begin{equation*}
\begin{split}
\mu_k^m(B')&=\inf\Bigg\{\sum_{j=1}^J\mu_k^m(R_j):\ R_j\in\mathcal{R}_{R_2'}\ \forall j,\ B'\subset\cup_{j=1}^JR_j\Bigg\}\\
&=\inf\Bigg\{\sum_{j=1}^J\mu_k^m(R_j'):\ R_j\in\mathcal{R}_{R_2}\ \forall j,\ B\subset\cup_{j=1}^JR_j\Bigg\}\\
&\geq\inf\Bigg\{\sum_{j=1}^J\int_{R_j}\frac{r_k^m}{1-r_k^m}d\mu_k^m:\ R_j\in\mathcal{R}_{R_2}\ \forall j,\ B\subset\cup_{j=1}^JR_j\Bigg\}\\
&=\inf\Bigg\{\int_{\cup_{j=1}^JR_j}\frac{r_k^m}{1-r_k^m}d\mu_k^m:\ R_j\in\mathcal{R}_{R_2}\ \forall j,\ B\subset\cup_{j=1}^JR_j\Bigg\}\\
&\geq\int_{B}\frac{r_k^m}{1-r_k^m}d\mu_k^m.
\end{split}
\end{equation*}
Therefore $(ii)_{\mathcal{B}}$ holds. 
\end{proof}

\subsection*{Theorem \ref{Tcounter}}
\begin{proof}
Require $M(s,(1\pm1)s)=M(-s,-(1\pm1)s)$. A straightforward modification to the proof of Theorem \ref{T1} reveals that it is sufficient to show 
\begin{equation}
\int_{R_+}h_0^1d\mu_0^1<0.
\label{countere1}
\end{equation}
Using logic from the proof of Corollary \ref{corMarkov}, \eqref{countere1} is equivalent to 
\begin{equation}
\begin{split}
\mathbb{P}(Y_1(0)=s)[h_0^1(s,s)M(s,2s)+h_0^1(s,-s)M(s,0)]<0.
\end{split}
\label{countere2}
\end{equation}
Require $\mathbb{P}(Y_1(0)=s)=\frac{1}{2}$ and $M(s,2s)+M(s,0)=1$. Then \eqref{countere2} holds if and only if
\begin{equation}
\begin{split}
M(s,2s)[h_0^1(s,s)-h_0^1(s,-s)]+h_0^1(s,-s)<0.
\end{split}
\label{countere33}
\end{equation}
Using the definition of $h_0^1$ and requiring $A>0$, the left side of \eqref{countere33} reduces to
\begin{equation*}
M(s,2s)\log\frac{\phi(1,A)+2}{\phi(1,A)+\exp(2s)+\exp(-2s)}+2\log\frac{\phi(1,A)+\exp s+\exp(-s)}{\phi(1,A)+2}.
\end{equation*}
Therefore \eqref{countere33} is equivalent to
\begin{equation}
\frac{2\log\frac{\phi(1,A)+\exp s+\exp(-s)}{\phi(1,A)+2}}{\log\frac{\phi(1,A)+2}{\phi(1,A)+\exp(2s)+\exp(-2s)}}<M(s,2s).
\label{countere44}
\end{equation}
Two applications of L'Hopital's rule and some manipulation reveals that 
\begin{equation*}
\begin{split}
\lim_{x\to\infty}\frac{2\log\frac{x+\exp s+\exp(-s)}{x+2}}{\log\frac{x+2}{x+\exp(2s)+\exp(-2s)}}&=2\cdot\frac{\exp s+\exp(-s)-2}{\exp(2s)+\exp(-2s)-2}\\
&=2\cdot\frac{(\exp s-1)(1-\exp(-s))}{(\exp s+1)(\exp s-1)(1+\exp(-s))(1-\exp(-s))}\\
&=\frac{2}{(\exp s+1)(1+\exp(-s))}\\
&<\frac{1}{2}.
\end{split}
\end{equation*}
Let $r=\frac{2}{(\exp s+1)(1+\exp(-s))}$. Set $M(s,2s)=\frac{1}{2}(r+\frac{1}{2})$, and choose $A$ large enough so that \eqref{countere44} holds.
\end{proof}

\end{appendices}


\bibliography{sn-bibliography}

\begin{thebibliography}{16}
\providecommand{\natexlab}[1]{#1}
\providecommand{\url}[1]{{#1}}
\providecommand{\urlprefix}{URL }
\providecommand{\doi}[1]{\url{https://doi.org/#1}}
\providecommand{\eprint}[2][]{\url{#2}}
 \bibcommenthead

\bibitem[{Amenc et~al(2008)Amenc, Goltz, and Le~Sourd}]{amenc2008comparison}
Amenc N, Goltz F, Le~Sourd V (2008) A comparison of fundamentally weighted
  indices: Overview and performance analysis. Nice: EDHEC Risk and Asset
  Management Research Centre

\bibitem[{An et~al(2017)An, Cheh, and Kim}]{an2017value}
An C, Cheh JJ, Kim Iw (2017) Do value stocks outperform growth stocks in the us
  stock market? Journal of Applied Finance and Banking 7(2):99

\bibitem[{Arnott et~al(2005)Arnott, Hsu, and Moore}]{arnott2005fundamental}
Arnott RD, Hsu J, Moore P (2005) Fundamental indexation. Financial Analysts
  Journal 61(2):83--99

\bibitem[{Bajaj et~al(2005)Bajaj, Denis, and Sarin}]{bajaj2005mean}
Bajaj M, Denis DJ, Sarin A (2005) Mean reversion in earnings and the use of e/p
  multiples in corporate valuation. P Multiples in Corporate Valuation

\bibitem[{Balvers et~al(2000)Balvers, Wu, and Gilliland}]{balvers2000mean}
Balvers R, Wu Y, Gilliland E (2000) Mean reversion across national stock
  markets and parametric contrarian investment strategies. The Journal of
  Finance 55(2):745--772

\bibitem[{Bayraktar et~al(2012)Bayraktar, Huang, and
  Song}]{bayraktar2012outperforming}
Bayraktar E, Huang YJ, Song Q (2012) Outperforming the market portfolio with a
  given probability. The Annals of Applied Probability 22(4):1465--1494

\bibitem[{Becker et~al(2012)Becker, Lee, and Gup}]{becker2012empirical}
Becker R, Lee J, Gup BE (2012) An empirical analysis of mean reversion of the
  s\&p 500’s p/e ratios. Journal of Economics and Finance 36(3):675--690

\bibitem[{Chan and Lakonishok(2004)}]{chan2004value}
Chan LK, Lakonishok J (2004) Value and growth investing: Review and update.
  Financial Analysts Journal 60(1):71--86

\bibitem[{Chaudhuri and Wu(2003)}]{chaudhuri2003mean}
Chaudhuri K, Wu Y (2003) Mean reversion in stock prices: evidence from emerging
  markets. Managerial Finance

\bibitem[{Fama and French(1992)}]{fama1992cross}
Fama EF, French KR (1992) The cross-section of expected stock returns. the
  Journal of Finance 47(2):427--465

\bibitem[{Fama and French(1998)}]{fama1998value}
Fama EF, French KR (1998) Value versus growth: The international evidence. The
  journal of finance 53(6):1975--1999

\bibitem[{Fernholz(2002)}]{fernholz2002stochastic}
Fernholz ER (2002) Stochastic portfolio theory. In: Stochastic portfolio
  theory. Springer, p 1--24

\bibitem[{Hsu et~al(2011)Hsu, West, and Arnott}]{hsu2011fundamental}
Hsu JC, West JM, Arnott RD (2011) The fundamental index: A better way to
  invest. John Wiley \& Sons

\bibitem[{Kitagawa(1996)}]{kitagawa1996monte}
Kitagawa G (1996) Monte carlo filter and smoother for non-gaussian nonlinear
  state space models. Journal of computational and graphical statistics
  5(1):1--25

\bibitem[{Sharpe(1964)}]{sharpe1964capital}
Sharpe WF (1964) Capital asset prices: A theory of market equilibrium under
  conditions of risk. The journal of finance 19(3):425--442

\bibitem[{Walksh{\"a}usl and Lobe(2010)}]{walkshausl2010fundamental}
Walksh{\"a}usl C, Lobe S (2010) Fundamental indexing around the world. Review
  of Financial Economics 19(3):117--127

\end{thebibliography}


\end{document}